\newtheorem{lemma}{Lemma}
\newtheorem{theorem}{Theorem}
\def\RR{{\mathbb{R}}}
\def\NN{{\mathbb{N}}}
\def\QQ{{\mathbb{Q}}}
\def\CC{{\mathcal{C}}}
\def\DD{{\mathcal{D}}}
\begin{document}
	
	\title[An exact dynamic programming algorithm for block sale]{An exact dynamic programming algorithm, lower and upper bounds, applied to the large block sale problem}

	\keywords{mixed integer non linear programming, non convex optimization, dynamic programming, optimal portfolio liquidation, large block sale, heuristics}
		
	\author{David Nizard$^{1,*}$, Nicolas Dupin$^1$ and Dominique Quadri$^1$}
	\thanks{$^1$Laboratoire Interdsicplinaire des Sciences du Numérique, Université Paris-Saclay, Gif-sur-Yvette, France}
	\thanks{$^*$ Corresponding author: \textbf{nizard@lri.fr}}

	\maketitle
	\subsection*{Abstract}
	\begin{small}
		In this article, we address a class of non convex, integer, non linear mathematical programs using dynamic programming.
The mathematical program considered, whose properties are studied in this article, may be used to model the optimal liquidation problem of a single asset portfolio, held in a very large quantity, in a low volatility and perfect memory market, with few market participants. In this context, the Portfolio Manager's selling actions convey information to market participants, which in turn lower bid prices and further penalize the liquidation proceeds we attempt to maximize.\\
We show the problem can be solved exactly using Dynamic Programming (DP) in polynomial time. However, exact resolution is only efficient for small instances. For medium size and large instances, we introduce dedicated heuristics which provide thin admissible solutions, hence tight lower bounds for the initial problem. We also benchmark them against a commercial solver, such as LocalSolver \cite{localsolver_2020}.
We are also interested in the continuously relaxed problem, which is non convex. 
Firstly, we use continuous solutions, obtained by free solver NLopt \cite{john_2021} and transform them into thin admissible solutions of the discrete problem.  
Secondly, we provide, under some convexity assumptions, an upper bound for the continuous relaxation, and hence for the initial (integer) problem.
Numerical experiments confirm the quality of proposed heuristics (lower bounds), which often reach the optimal, or prove very tight, for small and medium size instances, with a very fast CPU time. Our upper bound, however, is not tight.
	\end{small}
	
	\section{Introduction}
		In this article, we are interested in solving a non convex, integer, non linear mathematical program, in a maximization context, with a linear constraint.\\
Although the class of non convex mixed integer non linear problems (MINLP) is used to model a wide range of real world phenomena, it often comes with a steep price tag. These problems are generally NP-hard \cite{garey_1979}, which make them much harder to solve both in theory and practice, than convex non linear programs. It is so, because continuous relaxation of convex MINLP remains convex, which makes it, at least in theory, easier to handle. Indeed, in this convex context, a large panel of efficient methods were put forward early in the literature. For instance, Benders decomposition was introduced in \cite{geof_1972}, branch \& bound based approaches in \cite{gupta_1985,ques_1992}, and outer-approximation methods in \cite{duran_1986}. 
However, the particular subclass of convex quadratic pure integer programs has been covered, either by a straightforward branch \& bound, or by transforming the initial problem into a linear one. One can refer to \cite{bill_2012} and \cite{quad_2015} for examples of such approaches. In practice, current commercial solvers achieve excellent performances on this subclass.\\
Back to the general non convex case, where no general efficient algorithm is known, we find specific approaches in the literature, depending on the properties of the objective function $f$. Interestingly, the quadratic program, which has numerous applications in engineering and finance, has again been studied extensively (cf. survey \cite{flou_1995}).\\ 
More generally, when we assume $f$ differentiable over a compact set and attempt to minimize $f$, we know the minimum exists. A necessary condition for global minimum is stationarity $(\nabla f(\hat{x})=0)$. If furthermore, local convexity is achieved $(\nabla^2f(\hat{x})=0)$, then $\hat{x}$ is a local minimum. The remaining question is how to extend this local property to a global minimum ? In fact, as mentioned in survey \cite{hiri_1995}, a sufficient condition is that $f$ and its convex hull, denoted $cof$ coincide locally. Consequently, $\hat{x}$ is a global minimum if and only it is stationary and $cof(\hat{x})=f(\hat{x})$. In practice, we only shifted the problem because expressing $cof$ is a very arduous task. Nonetheless, a large body of literature is devoted to build a sequence of convex functions, inferior to $f$ on the admissible domain, referred to as convex under-estimators, which converge to the global minimum. Specifics of the algorithm and assumption on $f$ to ensure convergence are problem dependent. Reader can refer to \cite{horst_1995} and for a comprehensive presentation of current available techniques.\\     
In practice, quite logically, few solvers can cope with the general non convex case. 
Commercial solver CPLEX \cite{cplex_2018}, from its version 12.6, made progress in this direction by solving non convex quadratic programs with mixed integer variables, using spatial branch \& bound techniques. Recently, solvers such as BARON (distributed under GAMS \cite{gams_2021}) are able to solve this type of problems, but not for every objective function. For instance, BARON can not handle objective functions with an arctangent.\\
We also mention the free solver NLopt \cite{john_2021} which provides continuous solutions for a large spectrum of functions. While it implements different algorithms, we empirically found it works best, for our problem, using the gradient based optimization method developed in \cite{svan_2002}, which is based on conservative convex separable approximations and provide stationary points. In this paper, we only use NLopt in the context of our two-step approach, as discussed in section \ref{sectionNLO}.\\
In this study, we shifted our focus towards a specific mathematical program, whose properties we examine, and which originates from a real problem well known by financial practitioners: the optimal portfolio liquidation. We consider the case of a single asset portfolio, held in a very large quantity, in a low volatility and perfect memory market, with few market participants. In this context, the Portfolio Manager's selling actions convey information to market participants, which in turn lower bid prices and further penalize liquidation proceeds we attempt to maximize.\\
We show the problem can be solved exactly using Dynamic Programming (DP) in polynomial time. However, exact resolution is only efficient for small instances. For medium size and large instances, we introduce dedicated heuristics which provide thin admissible solutions, hence tight lower bounds for the initial problem. We also benchmark them against a commercial solver, such as LocalSolver \cite{localsolver_2020}.
We are also interested in the continuously relaxed problem, which is non convex. 
Firstly, we use continuous solutions, obtained by free solver NLopt \cite{john_2021} and transform them into thin admissible solutions of the discrete problem.  
Secondly, we provide, under some convexity assumptions, an upper bound for the continuous relaxation, and hence for the initial (integer) problem.
Numerical experiments confirm the quality of proposed heuristics (lower bounds), which often reach the optimal, or prove very tight, for small and medium size instances, with a very fast CPU time. Our upper bound, however, is not tight.\\
In the financial literature, the Large Block Sale (LBS) problem, which belongs to the larger class of optimal liquidation portfolio problems has been extensively studied through many different angles.
A first topic of interest was the market impact of large block trades. What is the performance of the asset subject to a LBS, under different time horizons ? 
How long does it take for the market to recover ? \cite{guth_1965, dann_1977,gemm_1996} provide empirical studies related to such questions.\\ 
A second alley studies specific markets dedicated to such transactions, known as \textit{upstairs market}, and their more recent electronic counterpart the \textit{dark pools}. The reader can refer, for instance to \cite{madh_1997} and \cite{mish_2017}.
A third approach was to model incentives of involved market participants in a block transaction (block trader, broker, specialist etc.),  study the price equilibrium, and derive existence and unicity conditions. \cite{sepp_1990} and \cite{keim_1996} provide examples.\\
Another type of model where price impact is inferred from the modeling of the Limit Order Book (LOB), which ranks best bid and ask prices and their corresponding volumes, was proposed in \cite{obiz_2005}. Refinements in that direction can be found in \cite{alfo_2008,alfo_2010} and \cite{obiz_2013}. \\
There are also abundant references for optimal liquidation in continuous time, where LBS turns into a stochastic control problem (\cite{barle_1994}, \cite{vath_2006}, \cite{seydel_2009}, \cite{khar_2010}, \cite{gath_2010,gath_2012}, just to mention a few).\\ 
However, in this article we consider LBS in discrete time, with integer variables. Our key question, how to optimally split the block into smaller orders in order to maximize the sale proceeds or equivalently minimize overall price impact, has been investigated in \cite{bert_1998}, \cite{almg_2000,almg_2003}.  \\
The corresponding models introduce a stock price dynamic, which accounts for previous trades size. The majority of the models distinguish between the \textit{temporary} price impact, due to the temporary imbalances between supply and demand at a given point in time, and the \textit{permanent} impact which reflects the effect on share price of the information conveyed by our trading (as defined in \cite{almg_2003}). Their resolution is often based on dynamic programming.\\
In most cases, related impact price functions obey linear or power laws. In addition, the majority of models feature either a \textit{short memory} in the sense their penalty at time $t_k$ is in $g(x_k)$, depending only of the $x_k$ units traded at that time, or their long term memory is separated (e.g \cite{almg_2000}), as the penalty at time $t_k$ is in $\sum_{j=1}^{k} g(x_j)$.\\
In this paper, as in \cite{boyd_2017}, we do not forecast the stock prices, which we assume to be known (or at least correctly estimated) prior to optimization. Therefore, we assume deterministic asset prices. Prices are actually simulated, using a classical geometric Brownian Motion \cite{hull_2002}, independently of the resolution of our mathematical program.\\ 
From a financial practitioner's standpoint, it is only realistic in very calm markets and for a low volatility asset, or very short time horizon (typically a day to a week), where no news or earnings are expected to be released. While it seems quite restrictive, it is in our experience, a very favorable environment to execute large block trades. On the contrary, in agitated markets, liquidity is scarce and/or large block sales of very volatile assets (which are not so frequent) are driven by prices evolution regardless of execution costs.
More importantly, this assumption insulates the market participants response to the liquidation, which constitutes our main object of study, from the evolution of asset prices. Consequently, our resolution algorithms and optimization techniques are valid regardless of the price vector. \\
Hence, our work aim to show how to best liquidate our single asset portfolio, provided asset prices evolution are correctly estimated, no matter how these estimations are made. We also included the case of constant prices during liquidation in numerical experiments, as a benchmark to best measure the dissemination of information in the market.\\
Outside of finance, the price vector $p$ can be interpreted as the \textit{standard} behavior of the environment unaltered by our actions. Reformulated in this context, we assume the standard behavior to be known, and we aim to study how our actions, which convey information to the environment with perfect memory, influence its response in order to maximize the output, or equivalently minimize our impact. \\

The contributions of our work are summarized as follows:
\begin{itemize}
	\item We propose a long term memory model, with a price impact function in $g(\sum_{j=1}^{k}x_j)$, for non linear bounded functions g, which is, to the best of our knowledge, new in the financial literature, related to the optimal liquidation problem. 
	\item We use dynamic programming to solve our non convex non linear integer program, exactly for small size instances. 
	\item We present a two-step method, based on an adapted dynamic programming algorithm, to compute heuristics for medium size and large instances. It yields very tight lower bounds, if not optimal solutions. We show this approach can be coupled with continuous optimization techniques to refine lower bounds of the initial problem. We also obtain an upper bound, while not tight, under some convexity assumptions on the objective.
	\item For almost every instance where our heuristics converge in tractable time, we beat the commercial solver LocalSolver v9.5 (cf. \cite{localsolver_2020}) and achieve a much higher optimal coverage ratio.
\end{itemize}

This article is organized as follows. Section \ref{sectionDefPb} defines the problem and its notations. Then, section \ref{sectionDPExact} introduces Bellman equation and solves the problem using dynamic programming. In section \ref{sectionLB}, we present different methods to obtain heuristics and discuss their complexity. In section \ref{sectionUB}, we study the continuously relaxed problem and compute an upper bound. We present our numerical results in section \ref{sectionNE}. Finally, section \ref{sectionCCL} concludes and presents futures perspectives for our work.
		
	\section{Problem statement and notations}\label{sectionDefPb}
			We consider an investor in a financial market, who holds $N$ units of an asset and wishes to liquidate them over a given time horizon, split in $T$ time steps.\\
He decides to sell at each time step the quantity $x_t \in [\![0;N]\!] $ of a single asset. Hence, the admissible decisions set is defined by:
\begin{equation}
 \DD = \left\{ (x_1,\ldots,x_T)\in \NN^T, \phantom{1} \sum_{t=1}^{T} x_t = N \right\} 
\end{equation}

For each time step $t \in [\![1;T]\!]$, $x_t\geq 0$, we assume a sell only program, and let $p_t>0$ be the asset best bid price.
We also assume there is a minimal floor price $q_t>0$ for very large block trades, with $q_t<p_t$. To simplify notations, we define $c_t = p_t - q_t >0$.
We also introduce a strictly increasing function $g$ from $\NN$ to $[0,1[\;$, such that $g(0)=0$ and $\displaystyle\lim_{+ \infty} g = 1$.\\
At a given time $t$, the execution price of a block of $x_t$ units of asset is modeled by:
\begin{equation}
	v_t(x_t)= p_t\, x_t - (p_t-q_t)\cdot x_t \cdot g\left(\displaystyle\sum_{k=1}^{t} x_k\right) = \left[p_t- c_t\cdot g\Big(\displaystyle\sum_{k=1}^{t} x_k\Big)\right]x_t
\end{equation}
Let f be the objective value function from $\NN^T\rightarrow\RR$:
\begin{equation}
 f(x)=\displaystyle\sum_{t=1}^{T}v_t(x_t)=\displaystyle\sum_{t=1}^{T}\Big[p_t- c_t\cdot g\Big(\displaystyle\sum_{k=1}^{t} x_k\Big)\Big]x_t
\end{equation}
Penalty function $g$ models the market response to the investor selling action, taking into account market memory.
$p_t$ is indeed only valid for a very low traded volume compared to N. Higher volumes lead to lower execution prices, but the pace of decrease reflect market participants information about our intent. The more information, the lower the price. Penalty function $g$ is applied to $y_t=\sum_{k=1}^{t} x_k$, the asset's liquidated quantity up to current time t. $y_t$ represents the available information in a market with perfect memory. Hence, $q_t$ corresponds to the maximum impact, where the market is fully aware of our intentions and react accordingly. The penalty is increasing in $y_t$. The higher $y_t$, the closer the executed price to $q_t$, which justifies our assumptions for $g$.

Therefore, the investor's problem consists of maximizing the sale proceeds from his holding: 
\begin{equation}
	\mbox{OPT} = \displaystyle\max_{x \in \DD} f(x) \label{pbMain0}
\end{equation}

Problem (\ref{pbMain0}) is well defined, as $\DD$ is finite. In the most general case, (\ref{pbMain0}) is an non convex, non separable, non linear, integer optimization problem.\\
  
One will also be interested in the continuous relaxation of the problem. Hence, we extend previous notations.
Let $\CC$ be the set corresponding set of admissible solutions: 
\begin{equation}
	\CC = \left\{ (x_1,\ldots,x_T)\in \RR_+^T, \phantom{1} \sum_{t=1}^{T} x_t = N \right\}
\end{equation}
\begin{lemma}\label{lemCompactC}
	$\CC$ is a compact set.
\end{lemma}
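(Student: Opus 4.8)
The plan is to invoke the Heine--Borel theorem: a subset of $\RR^T$ is compact if and only if it is closed and bounded. So the proof splits into verifying these two properties of $\CC$.

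First I would establish boundedness. Since every coordinate satisfies $x_t \geq 0$ and $\sum_{t=1}^{T} x_t = N$, each $x_t$ is squeezed between $0$ and $N$; hence $\CC \subseteq [0,N]^T$, which is bounded (for instance, every element has Euclidean norm at most $N\sqrt{T}$). This step is immediate and uses nothing beyond the nonnegativity constraint together with the equality constraint.

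Next I would establish closedness by writing $\CC$ as an intersection of closed sets. The orthant $\RR_+^T$ is closed, being the intersection over $t$ of the preimages $\{x : x_t \geq 0\}$ of the closed half-line $[0,+\infty)$ under the continuous coordinate projections. The hyperplane $H = \{x \in \RR^T : \sum_{t=1}^T x_t = N\}$ is closed, being the preimage of the singleton $\{N\}$ under the continuous linear map $x \mapsto \sum_{t=1}^T x_t$. Therefore $\CC = \RR_+^T \cap H$ is closed as a finite intersection of closed sets. Combining with the previous paragraph, $\CC$ is closed and bounded in $\RR^T$, hence compact.

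There is no real obstacle here; the only thing to be mildly careful about is that $\RR_+^T$ must be understood as the \emph{closed} nonnegative orthant (which is the intended meaning, given that $\CC$ is to be used as the feasible set of a continuous relaxation on which one maximizes $f$), since the open orthant would fail to be closed. Once that convention is fixed, the argument is a routine application of continuity and Heine--Borel.
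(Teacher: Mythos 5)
Your proof is correct and follows essentially the same route as the paper: closedness via the preimage of $\{N\}$ under the continuous sum map, boundedness from containment in $[0,N]^T$, and Heine--Borel to conclude. The only cosmetic difference is that you intersect with the closed orthant $\RR_+^T$ while the paper intersects directly with the cube $[0;N]^T$; the substance is identical.
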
 
\begin{proof} Let $T$ be an integer (corresponding to the number of time steps), $N$  be a real number and $u$ be the function from $\RR^T\mapsto \RR$, such as $u(x)=\sum_{i=1}^T x_i$.\\
	Singleton set $\{N\}$ is a closed set of $\RR$ (because $\RR\backslash\{N\}$ is an open set). Hence its inverse image by the continuous function $u$ is a closed set of  $\RR^T$ (topological characterization of continuity).
	Moreover, $[0;N]^T$ is a closed bounded set of $\RR^T$.\\ 
	Therefore, $\CC = u^{-1}(\{N\})\cap[0;N]^T$ is a closed bounded set of $\RR^T$ which proves its compacity.\\ 
\end{proof}
Let $\overline{g}$ be the real extension of $g$, defined on $\RR_+$. It is strictly increasing  on $\RR_+$, and such that $\overline{g}(0)=0$ and $\displaystyle\lim_{+ \infty} \overline{g} = 1$.
Let $\overline{v_t}(x_t)=\left[p_t - c_t\cdot\overline{g}\Big(\displaystyle\sum_{k=1}^{t} x_k\Big)\right] x_t$ \\
Let $\overline{f}$, from $\RR^T\rightarrow \RR$, such that $\overline{f}(x)=\displaystyle\sum_{t=1}^{T} \overline{v_t}(x_t)$ \\
We finally consider the following optimization problem:\\

\begin{equation}\mbox{UB}_1 = \displaystyle\max_{x \in \CC}\overline{f}(x) \label{pbCont}
\end{equation}
In the remaining of the paper, we will assume $\overline{g}$ is continuously differentiable. Therefore, problem (\ref{pbCont}) is also well defined, as $\CC$ is a compact set, according to Lemma \ref{lemCompactC} and $\overline{f}$ is continuous. \\
As expected, problem (\ref{pbCont}) is in general non convex. So getting an upper bound of (\ref{pbMain0}), requires global solving, which often proves to be as hard as solving the initial problem. There are nonetheless several motivations to study problem (\ref{pbCont}).\\
Firstly, a global resolution of either the continuous relaxation, or any problem with a higher objective function, leads to an upper bound of the initial problem (\ref{pbMain0}). We present and solve such a problem in section \ref{sectionUB}, and denote its optimal value $\overline{\mbox{UB}_2}$, so that $\mbox{OPT}\leqslant \mbox{UB}_1\leqslant\overline{\mbox{UB}_2}$.\\    
Secondly, when $N \gg T$, discrete approximation is accurate enough, so that discrete and continuous modeling should be close. Indeed, we show continuous optimization techniques can be efficiently used to refine lower bounds of the initial problem (\ref{pbMain0}).\\
Therefore, our purpose is either to solve problem (\ref{pbMain0}) exactly, under a CPU time limit, or to provide an interval for the optimal value.\\
For numerical experiments, we selected concave functions with different convergence speeds to infinity: $g(x) = 1 - \frac 1 {1+x}$, $g(x) = 1 - \frac 2 {1+\sqrt{1+x}}$ ou $g(x) = \frac 2 {\pi} \mbox{arctan}(x)$.\\
Concavity is equivalent to decreasing first order derivative, which makes sense for numerical experiments.\\
While we selected a few penalty functions for numerical experiments, any strictly increasing positive function $h$, satisfying aforementioned regularity conditions is eligible. Indeed, we can define the corresponding function $g$ on $\RR_+$, by $g: x\mapsto \frac{h(x)-h(0)}{h(L_{\infty})-h(0)}$, for some real $L_{\infty}>N$, so that when $y_t$ goes to N, $g$ can be arbitrarily close to 1. Hence the pool of candidates for penalty function is quite large.\\
Lastly, we did not apply any filter to the admissibility set $\DD$  to exclude unrealistic solutions, from an economic standpoint. For instance, the \textit{fire sale} $(M,0,\cdots,0)$, which liquidates the whole block in one shot at the first time step, is admissible, while it very seldom is in practice. We are not concerned by lack of liquidity in the market and potential trading halt, due to exchange circuit-breakers, triggered upon a strong selling action at any time step. We assume optimal solutions to be practically feasible. In most cases, solutions stemming from instances considered in numerical experiments remain below the system limits for most blue chip company stock.  We did not however carry out any further analysis about it.
	
	\section{Exact dynamic programming algorithm}\label{sectionDPExact}

In this section, we prove formally Bellman's equation and describe the dynamic programming algorithm \cite{bell_1957} used for exact resolution. We consider the following optimization problem: 
\begin{equation}(P_{t,n})\left\{
	\begin{array}{l}
		\displaystyle\max_{(x_1,\ldots,x_t)} f(x)=\displaystyle\sum_{i=1}^{t}\Big[p_i-c_i\,
		g\Big(\displaystyle\sum_{k=1}^{i} x_k\Big)\Big]x_i \\
		s.t: h(x)=\displaystyle\sum_{i=1}^{t}x_i-n = 0 \\
		\forall i, x_i\in \mathbb{N} \\
		\forall i, 0 \leq x_i\leq n \\
	\end{array}
	\right.\label{pbMain}
\end{equation}
$(P_{t,n})$ is equivalent to problem (\ref{pbMain0}), for $t=T$ and $n=N$. It is a discrete problem with bounded decision variables. Hence solutions exist for $t\leq n$. Let $O_{t,n}$ be its optimal value. 

\begin{theorem}\label{thBell}
	$\forall t,n,\;1\leq t\leq n,\;O_{t,n}=\displaystyle\max_{i \in [\![0;n]\!]} \left\{ O_{t-1,n-i}  + \big[p_t-c_t\,g\big(n\big)\big]i\;\right\}$\\
	\hbox{with initial conditions} $O_{t,0}=O_{0,n}=0$. 
\end{theorem}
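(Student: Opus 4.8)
The plan is to establish Bellman's equation by the standard dynamic programming argument: decompose an optimal solution of $(P_{t,n})$ according to the value of its last decision variable $x_t$, and show that what remains is an optimal solution of a smaller subproblem of the same form.

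First I would observe the crucial structural fact that makes the recursion work: the objective of $(P_{t,n})$ is ``almost separable'' in the following sense. Writing $y_t = \sum_{k=1}^t x_k$, the term contributed at time $t$ is $\big[p_t - c_t\, g(y_t)\big]x_t$, and $g$ is evaluated at the \emph{total} liquidated quantity $y_t$. But on the feasible set of $(P_{t,n})$ we have exactly $y_t = n$, so this last term is simply $\big[p_t - c_t\, g(n)\big]x_t$, a quantity that depends only on $x_t$ (and the fixed data $n$), not on how $x_1,\dots,x_{t-1}$ are chosen. The first $t-1$ terms of $f$ depend only on $(x_1,\dots,x_{t-1})$, and under the constraint $\sum_{i=1}^t x_i = n$ with $x_t = i$ fixed, the remaining variables satisfy $\sum_{i=1}^{t-1}x_i = n-i$ with $x_i\in\NN$, $0\le x_i\le n-i$ --- i.e. they range over exactly the feasible set of $(P_{t-1,n-i})$. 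This is the decomposition I would write out carefully.

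From that decomposition the two inequalities follow. For ``$\le$'': take an optimal solution $(x_1^\star,\dots,x_t^\star)$ of $(P_{t,n})$, set $i = x_t^\star$; then $(x_1^\star,\dots,x_{t-1}^\star)$ is feasible for $(P_{t-1,n-i})$, so its partial objective is at most $O_{t-1,n-i}$, giving $O_{t,n} = f(x^\star) \le O_{t-1,n-i} + [p_t - c_t g(n)]\,i \le \max_{i\in[\![0;n]\!]}\{\,\cdots\,\}$. For ``$\ge$'': fix any $i\in[\![0;n]\!]$, take an optimal solution of $(P_{t-1,n-i})$ (which exists since $t-1\le n-i$ is needed --- see the caveat below) and append $x_t = i$; this is feasible for $(P_{t,n})$ and achieves objective value $O_{t-1,n-i} + [p_t - c_t g(n)]\,i$, so $O_{t,n}$ is at least this, and taking the max over $i$ gives the reverse inequality. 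The boundary conditions $O_{t,0}=O_{0,n}=0$ are immediate: the only feasible point is the all-zeros vector (resp. the empty sum), with objective $0$.

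The main obstacle --- really a bookkeeping subtlety rather than a deep difficulty --- is the treatment of feasibility/existence of subproblems near the boundary of the index range. The recursion as stated quantifies $i$ over all of $[\![0;n]\!]$, but $(P_{t-1,n-i})$ only has feasible solutions when $t-1 \le n-i$, i.e. $i \le n-t+1$; for larger $i$ the subproblem is infeasible. I would handle this either by adopting the convention that $O_{t-1,m} = -\infty$ (or is simply excluded from the max) when $m < t-1$, and then check that this convention does not change the value of the maximum --- intuitively, spreading the sale over fewer-than-$t$ periods is never forced and, given $t\le n$, the optimum is attained at an admissible split. One should also note that for $t-1=0$ the subproblem $(P_{0,n-i})$ is the empty problem with value $0$ only when $n-i=0$, which is consistent with $O_{0,n}=0$ read as ``$0$ if $n=0$''; a careful statement would restrict accordingly, or note that the case $t=1$ is checked directly: $O_{1,n} = [p_1 - c_1 g(n)]\,n$, matching $\max_i\{O_{0,n-i} + [p_1-c_1g(n)]i\}$ with $O_{0,0}=0$ forcing $i=n$. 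Once these edge cases are pinned down, the proof is complete.
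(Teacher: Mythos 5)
Your proof is correct and takes essentially the same route as the paper's: decompose an optimal solution at the last variable $x_t$ (using that $\sum_{k=1}^t x_k=n$ on the feasible set, so the last term is $[p_t-c_t\,g(n)]x_t$) and establish the two inequalities by truncating and by appending, the paper merely presenting this as an induction on $t$ whose hypothesis is not really used beyond the definition of the subproblem optimum. One remark: your feasibility caveat is vacuous, since the $x_i$ may equal zero, so $(P_{t-1,n-i})$ is feasible for every $0\le i\le n$, and with the paper's convention $O_{0,m}=0$ the case $t=1$ closes because $p_1-c_1\,g(n)>0$, exactly as you verify directly.
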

\begin{proof}
	By induction on t. \\
	Let n be fixed and $(t<n)$. For $t=1$, we tautologically sell the whole block at the only allowed time t. 
	\begin{equation*}
		\begin{split}
			O_{1,n}&=f(n) \\
			& = \big[p_1-c_1\,g\big(n\big)\big]\,n \\
			& = \displaystyle\max_{i \in [\![0;n]\!]}\left\{ \big[p_1-c_1\, g(n)\big]i\right\},\;since\;(p_1-c_1\,g(n))>0\\
			& = \displaystyle\max_{i \in [\![0;n]\!]}\left\{O_{0,n-i}  + \big[p_1-c_1\, g(n)\big]i\right\},\;since\;O_{0,n-i}=0\; \forall i\\
		\end{split}
	\end{equation*}
	which proves the result for t=1.\\
	Let us suppose the result be true for a given t.\\
	Let $g_i(x)\equiv\Big[p_i-c_i\,g\Big(\displaystyle\sum_{k=1}^{i} x_k\Big)\Big]$
	\\
	Let $(\widehat{x_1},\cdots,\widehat{x}_{t+1})$ be an optimal solution of $(P_{t+1,n})$. By definition, \\
	\[O_{t+1,n}=\displaystyle\sum_{i=1}^{t}g_i(\widehat{x})\,\widehat{x_i} + \big[p_{t+1}-c_{t+1}\,g(n)\big]\,\widehat{x}_{t+1} \]
	Since $\displaystyle\sum_{i=1}^{t}\widehat{x_i}=m -\widehat{x}_{t+1}$, the first t coordinates of $\widehat{x}$ constitute an admissible solution of $(P_{t,n-\widehat{x}_{t+1}})$. By our induction assumption,  its value function is therefore bounded by $O_{t,n-\widehat{x}_{t+1}}$. It yields\\
	\begin{equation*}
		\begin{split}
	O_{t+1,n}&\leq O_{t,n-\widehat{x}_{t+1}} + \big[p_{t+1}-c_{t+1}\,g(n)\big]\,\widehat{x}_{t+1}\\
	&\leq\displaystyle\max_{i \in [\![0;n]\!]} \left\{ O_{t,n-i}  + \big[p_{t+1}-c_{t+1}\,g(n)\big]i\;\right\}
		\end{split}
	\end{equation*}
	Conversely, for all $0\leq i\leq n$, $(P_{t,n-i})$ admits a solution. Hence,\\
	\[
		\exists(x_1,\cdots,x_t)\;s.t \left\{
		\begin{array}{l}
			O_{t,n-i}=\displaystyle\sum_{j=1}^{t}g_j(x)\,x_j \\
			\displaystyle\sum_{j=1}^{t}x_j=n-i	
		\end{array}
		\right.
	\]
	The extension of x by i $(x_1,\cdots,x_t,\mathbf{i})$ is admissible for $(P_{t,n})$. Hence, its value function is bounded by $O_{t+1,n}:$\\
	\begin{align*}
		f(x) &\leq O_{t+1,n}\\
		\displaystyle\sum_{j=1}^{t}g_j(x)\,x_j + \big[p_{t+1}-c_{t+1}\,g(n)\big]\,i & \leq O_{t+1,n}\\
		O_{t,n-i} + \big[p_{t+1}-c_{t+1}\,g(n)\big]\,i & \leq O_{t+1,n}
	\end{align*}
	Since this inequality can be established for any i in $[\![0;n]\!]$, we take the maximum on i and conclude:\\
	\[
		\displaystyle\max_{i \in [\![0;n]\!]} \left\{O_{t,n-i} + \big[p_{t+1}-c_{t+1}\,g(n)\big]\,i\right\} \leq O_{t+1,n}
	\]
	 which proves the results for $(P_{t+1,n})$.
\end{proof}
Bellman equation from Theorem \ref{thBell} provides an explicit scheme to solve problem (\ref{pbMain0}), or equivalently $(P_{T,N})$, through dynamic programming. We present the exact resolution in the next paragraph.\\

We first build the T by N matrix $\big(O_{t,n}\big)_{t\leq T,\, n\leq N}$ which contains optimal value of the sub-problems. Then, we derive the optimal strategy by backtracking in Bellman equation.  We present the exact DP algorithm, referred to as Algorithm \ref{algoDPExact} :\\\\
\begin{algorithm}[H]
	\DontPrintSemicolon
	\textit{\textbackslash\textbackslash Build $\big(O_{t,n}\big)$ matrix}\;
	\For{$t=1$ \KwTo $T$}{
		\For{$n=1$ \KwTo $N$}{
			\For{$k=0$ \KwTo $n$}{
				$O_{t,n}= \max\Big(O_{t,n},O_{t-1,n-k}+ \big[p_t-c_t\,g\big(n\big)\big]\,k\;\Big)$\; 
				}
			}
	}
	\textit{\textbackslash\textbackslash compute optimal strategy by backtracking}\;
		$n= N$\;
	\For{$i=T$ \KwTo 1}{
		\eIf{$O_{i,n}=O_{i-1,n}$}{ 
			$x_{i} = 0$\;
		}{
			$k= n$\;
			\While{\Big($O_{i,n}\neq O_{t-1,n-k}+\big[p_t-c_t\,g\big(n\big)\big]\,k\Big)\;\land\;(k>0)$}{
				$k= k-1$\;
			}
			$x_i= k$\;
			$n= n-k$\;
		}
	}	

	\label{algoDPExact}
	\caption{Exact Dynamic Programming}
\end{algorithm}
Space (computer memory) complexity is $O(TN)$. Moreover, in the third loop, we have to perform N evaluations, at worst, to determine $O_{t,n}$. Hence, it takes $O(TN^2)$ to compute the whole T by N matrix. In the backtracking algorithm, for each $i\leq T$, we perform at worst N comparisons, so complexity of the backtracking is $O(TN)$. Therefore time complexity of the DP algorithm is $O(TN^2)$. \\
If we store $\big(g(0),\cdots,g(N)\big)$, data consist in vector $p,c,\;and \;g$. Hence data are $O(T+N)$. Hence time complexity is polynomial, cubic at worst since $O(TN^2)<O\big((T+N)^3\big)$.\\	
If the vector g(i) is not stored, data are $O(T)$. Therefore space complexity is pseudo polynomial (polynomial only when N is fixed).\\
However, in practice, exact resolution proves too costly in space/time for large instances. In the next paragraph, we provide different heuristic methods to get tight lower bounds.
	
	\section{Lower bounds of the initial problem}\label{sectionLB}
		In this section, we present different methods to rapidly compute tight lower bounds. We start with naive heuristics, then we introduce two DP based algorithms and an Iterated Local Search (ILS) algorithm.
\subsection{Naive heuristics}
There are many ways to generate a set of positive integer of a given sum N. It could be generated randomly. In this subsection, we select two intuitive heuristics, which we will use as benchmark for calibration and numerical experiments.
\begin{itemize}
	\item The first one is the fire sale: $x=\left(M,0,\cdots,0\right)$: we liquidate the block in one shot at the first time step. We get logically the maximum penalty for this panic move. It almost never is the best strategy (for the asset class we consider) and in the constant prices case, it turns out to be actually the worst admissible strategy.
	\item The second one is the uniform sale strategy $x=\left(N/T,\cdots,N/T\right)$, which liquidates the block linearly with time. 
\end{itemize}

\subsection{Two-step DP based methods}\label{sectionCGrain}
This technique consists in two independent steps. We firstly derive an approximate but very fast heuristic, by applying DP to P-sized buckets. Secondly, we refine it by intensifying the research in its neighborhood using an adapted DP method, presented below in Algorithm \ref{algoDPBounds}.

\begin{itemize} 
	\item \textbf{Coarse grain DP}: when N is large, exact DP is too costly as time complexity grows as $T\,N^2$. So a natural idea consists of selling buckets of P units $(P\gg 1)$, which we refer to as grain P. Resolution algorithm is almost identical to the exact one with $N'=\lfloor N/P\rfloor$. Hence its time complexity is $O\left(\dfrac{T\,N^2}{P^2}\right)$, faster by a factor $P^2$. The coarser the grain the faster the heuristic, and the lesser its quality (distance to optimal). Hence, as often, there is a trade-off in grain between CPU time and heuristic quality, which we touch upon in section \ref{sectionNETSCoarse}. \\
	\item \textbf{DP with bounds}: assuming the previous stage heuristic is \textit{close} to the optimal, we apply the exact DP algorithm in its neighborhood. We restrict the search by imposing bounds on admissible solutions. We first introduce the following definitions.\\
	Let $x^0=\left(x_1^0,\cdots,x_T^0\right)$ be the initial solution.\\
	Let $l$ and $u$ the lower (resp. upper) bounds for x such that: $\forall\, t\; 0 \leq l_t\leq x_t\leq u_t\leq N$\\
	Let: $L_t= \min\left(\displaystyle\sum_{i=1}^{t} l_i, N\right)$ and $U_t=\min\left(\displaystyle\sum_{i=1}^{t}u_i,N\right)$.\\
	Hence, $\forall\,t\; L_t\leq y_t\leq U_t$, where $y_t=\displaystyle\sum_{i=1}^{t}x_i\\$\\ 
	We then rewrite the restricted Bellman equation from Theorem \ref{thBell}:\\  
	$O_{t,n}= \displaystyle\max_{l_t \leq k\leq u_t}\Big(O_{t-1,n-k}+ \big[p_t-c_t\,g\big(n\big)\big]\,k\;\Big)$, subject to $L_t \leq n=y_t\leq U_t$\\
	We introduce the DP with bounds algorithm, referred to as Algorithm \ref{algoDPBounds}: \\
\end{itemize}

\begin{algorithm}[H]
	\DontPrintSemicolon
	\textit{\textbackslash\textbackslash Build $\big(O_{t,n}\big)$ matrix}\;
	\For{$t=1$ \KwTo $T$}{
		\For{$n=L_t$ \KwTo $U_t$}{
			\For{$k=l_t$ \KwTo $u_t$}{
				$O_{t,n}= \max\Big(O_{t,n},O_{t-1,n-k}+ \big[p_t-c_t\,g\big(n\big)\big]\,k\;\Big)$\; 
			}
		}
	}
	\textit{\textbackslash\textbackslash backtracking is identical to Algorithm (\ref{algoDPExact})}\;
	\textit{\textbackslash\textbackslash Get solution by backtracking in $O_{T,N}$ matrix}\;
	
	\label{algoDPBounds}
	\caption{Dynamic Programming with bounds}
\end{algorithm}
Straightforward reading from previous algorithm yields time and space complexity. In the memory space, we need to store $O_{t,n}$, when n varies from $U_t$ to $L_t$, and t from 1 to T. Hence space complexity is $O\left(\sum_{t=1}^T U_t-L_t\right)$. In time, each $O_{t,n}$ requires $u_t-l_t$ computations, hence time complexity is $O\left(\sum_{t=1}^T (U_t-L_t)(u_t-l_t)\right)$.\\
Let us suppose $u_t-l_t$ is bounded by some $R\geq 0$ for all t. This is typically the case for the funnel around a given heuristic. We note $R$ can always be defined as decision variables are finite and bounded and $R\leq N$.\\
\begin{align*}
	U_t-L_t & = \sum_{i=1}^t u_i-l_i\\
	 & \leq t\,R\\
\mbox{and, } \sum_{t=1}^T (U_t-L_t)(u_t-l_t) & \leq R\,\sum_{t=1}^T (U_t-L_t)\\
	&\leq\sum_{t=1}^T t\,R^2\\
	&\leq\frac{T(T+1)R^2}{2}
\end{align*} 
Conclusion, when we bind our research neighborhood by $R$, space complexity becomes $O\left(T^2\,R\right)$ and time complexity is $O\left(T^2\,R^2\right)$.
\begin{itemize}
	\item These steps can work in synch. We firstly compute an heuristic based on a P grain using DP. Then, we refine the solution using DP with bounds, with a funnel of size $\lambda P$ around the first stage heuristic ($\lambda$ is a scalar). In this two-step approach, P and  $\lambda$ are the only degree of freedom. Selection of these parameters for best heuristics quality and time is an interesting question, which is discussed in section \ref{sectionNE}.
	\item The two steps can also be run independently. Any heuristic can be coupled with the DP with bounds algorithm. However, to improve the first stage heuristic, one has to suspect a good local maximum lies in the neighborhood. Typically a small funnel around the fire sale strategy will not yield a good result.
	\item Another interesting point is that the DP with bounds technique can be applied to a continuous solution $x^0$. For instance, $l_t=\max\left(0,\lfloor x_t^0\rfloor-\lambda P\right)$ and $u_t=\min\left(N,\lceil x_t^0\rceil+\lambda P\right)$ define admissible bounds. Hence, we can apply any continuous optimization technique to compute a good admissible continuous solution of problem \ref{pbCont} and then search discrete local maxima around it. Performances of this hybrid two-step method (continuous relaxation coupled with DP with bounds) are discussed in numerical experiments in section \ref{sectionNE}.
		\item On the contrary to exact DP, we can not guarantee results are optimal. It is the main drawback of the two-step method. 
\end{itemize}

 \subsection{Iterated Local Search (ILS)}\label{sectionILS}
In this section, we introduce an ILS algorithm, referred to as Algorithm \ref{algoILS}, which starts from an admissible solution and provides an admissible local maximum. Let $x^0$ be an admissible solution. We first shift $x_1^0$ by $+P$ and $x_2^0$ by $-P$, for some fixed integer $P$. We apply these shifts only if resulting decision variables remain in the feasible $[\![0;N]\!]$ domain. We can easily see that $x^1_1+x^1_2+\sum_{i=3}^T x^0_i=N$ hence a solution's admissibility is stable by the P-shift operator, for all P and all pair $(x_i,x_{i+1})$.\\
If this shift improves the value function, we store the gap between shifted and original solution. We then apply the opposite shift $-P$, subject to the same boundary conditions. Now we make i vary between 2 and T and proceed similarly with pairs $(x_i^0, \,x_{i+1}^0)$.\\
Consequently, we considered $2\,(N-1)$ potential shifts. If no transformation improves the objective value function the algorithm stops and returns $x^0$. Otherwise, we select the biggest gap, store the corresponding solution $x^1$ and iterate the same steps starting from $x^1$.\\
We iterate the process and return either a local maximum, which corresponds to a fixed point $x^{K+1}=x^K$ for the P-shift operator, or the best solution achieved under a preset time and number of iterations limit.\\
Lastly, to improve this local maximum, we can keep applying different $P'$-shifts, with $(P'\neq P)$. For instance, in numerical experiments, we proceed by dichotomy on P: $P_0=2^R,\,P_1=2^{R-1},\cdots, P_R=1$, and $R=\lfloor\frac{\ln(N)}{\ln(2)}\rfloor$ the largest integer where a $2^R$ shift may be possible.\\
This local search algorithm triggers a call to the objective value function f for every shift. We improve its efficiency by comparing the relevant terms of f, as described in the following Algorithm \ref{algoILS} :

\begin{algorithm}[H]
	\DontPrintSemicolon
	$runSum= 0$\;
	$auxSum= 0$\;
	$\Delta= 0$\;	
	$f_{max}= 0$\;	
	$i_{opt}= -1$\;
	
	\For{$i=1$ \KwTo N-1}{
		$d_+= 0$\;
		$d_-= 0$\;
		$runSum= runSum+x_i$\;
		$auxSum= runSum+x_{i+1}$\;
		$d_0=\big[p_i-c_i\,g(runSum)\big]\,x_i +\big[p_{i+1}-c_{i+1}\,g(auxSum)\big]\,x_{i+1}$\;
		\If{($x_i+P\leq M) \land (x_{i+1}-P\geq 0)$}{
			$d_+=\big[p_i-c_i\,g(runSum+P)\big]\,(x_i+P)+\big[p_{i+1}-c_{i+1}\,g(auxSum)\big]\,(x_{i+1}-P)$\;
		}			
		\If{($x_{i+1}+P\leq M) \land (x_{i}-P\geq 0)$}{
			$d_-=\big[p_i-c_i\,g(runSum-P)\big]\,(x_i-P)+\big[p_{i+1}-c_{i+1}\,g(auxSum)\big]\,(x_{i+1}+P)$\;
		}			
		\eIf{$(d_+\geq d_-)$}{ 
			\If{$(d_+-d_0 > \Delta)$}{
				$\Delta = d_+-d_0$\;
				$i_{opt}= i$\;
				$\epsilon= 1$\;
			}
		}{
			\If{$(d_--d_0 >\Delta)$}{	 
				$\Delta = d_--d_0$\;
				$i_{opt}= i$\;
				$\epsilon= -1$\;
			}
		}
		\If{$i_{opt}>-1$}{ 
			$x_{i_{opt}}= x_{i_{opt}} + \epsilon\cdot P$\;
			$x_{i_{opt}+1}= x_{i_{opt}} - \epsilon\cdot P$\;
			$f_{max}= f_{max} + \Delta$\;
		}
	}
	\label{algoILS}
	\caption{Iterated Local Search}
\end{algorithm}

We notice that, for every i, we increment the running sums $\displaystyle\sum_{j=1}^{i}x_j$ and $\displaystyle\sum_{j=1}^{i+1}x_j$ and compare only the impacted terms of the objective value function f, namely $d_0$ (no shift), $d_+$ (shift $(i,i+1,P)$ ) and $d_-$ (shift $(i,i+1,-P)$ ). \\ 
Regarding its application, it can be used, either as a first step heuristic, starting from a naive solution (e.g uniform sale), or as a second step to improve an existing local maximum.

 \subsection{Commercial discrete solver: LocalSolver}
Lastly, we use a commercial solver dedicated to local optima to benchmark against our lower bounds. We selected LocalSolver \cite{localsolver_2020} as one of the leading solvers regarding local search algorithms. We did not carry out a exhaustive comparison against every solver which solves at least locally problem (\ref{pbMain0}), as we believe LocalSolver results to be representative of the state of the art of commercial solvers relevant for this problem.

\subsection{Free continuous solver NLopt}\label{sectionNLO}
As mentioned in section \ref{sectionCGrain}, we can use continuous solution in the first stage of our two-step approach. In this paper, we compute admissible continuous solutions using the free/open-source solver NLopt \cite{john_2021}, which specializes in continuous nonlinear optimization. We tested every available gradient based algorithm relevant for our problem and selected the Conservative Convex Separable Approximation \cite{svan_2002}, in its quadratic version (CCSAQ), which empirically works best for the relaxed problem \ref{pbCont}. In a maximization context, this algorithm generates and solves concave separable subproblems using approximate objective and constraint functions at each iteration. Subproblems approximate functions, are deemed \textit{conservative} when they become inferior to the objective function and underlying constraints. It is globally convergent in the sense it converges toward the set of points satisfying Karush-Kuhn-Tucker (KKT) conditions, which is non empty since $\mbox{UB}_1$ exists. Again, without further convexity assumption, it does not provide the global maximum $\mbox{UB}_1$. In fact, because problem (\ref{pbCont}) is not convex, we found a few instances, where NLOpt solution was lower than the optimal discrete solution of problem (\ref{pbMain0}), although by a tiny margin.\\ We thus consider NLOpt heuristics for what they are, fairly good admissible solutions of the continuously relaxed problem (\ref{pbCont}) and hence prime candidate for neighborhood search during the second stage.
	
	
	\section{Upper bound using monotony}\label{sectionUB}
		In the previous section, we proposed lower bounds of the discrete problem (\ref{pbMain0}). To obtain an interval for the solution, when the optimal is unknown, one needs a true upper bound, which is the object of this section.\\
We first notice that penalty function $g$ is assumed strictly increasing, and decision variable $x_i$'s are non negative: 
\begin{align*}
	\overline{g}\Big(\displaystyle\sum_{k=1}^{t} x_k\Big)\geq& \;\overline{g}(x_t)\\ \\
	-\overline{g}\Big(\displaystyle\sum_{k=1}^{t} x_k\Big)\leq&-\overline{g}(x_t)\\ 
	f(x)\leq&\displaystyle\sum_{t=1}^{T}\Big[p_t- c_t\cdot \overline{g}(x_t)\Big]x_t\\
\end{align*}
We can hence define the continuous optimization problem:
\begin{equation}
	\mbox{UB}_2 = \displaystyle\max_{x \in \CC}\,U(x) \label{pbSep}
\end{equation}
where
\begin{equation*}
	U(x)=\displaystyle\sum_{t=1}^{T}u_t(x_t)=\displaystyle\sum_{t=1}^{T}\Big[p_t- c_t\cdot \overline{g}(x_t)\Big]x_t
\end{equation*} 
Problem (\ref{pbSep}) is well defined, and $\mbox{UB}_2$ provides an upper bound of problem (\ref{pbMain0}) and (\ref{pbCont}). We also notice its value function is separated (it can be written as a sum of univariate functions).\\
We now introduce sufficient condition on $g$ to ensure problem (\ref{pbSep}) is concave.

\begin{lemma}\label{lemXG}
	Let function $x\mapsto x\,\overline{g}(x)$, defined in $\RR_+$, be strictly convex. Then function $U$ is strictly concave and problem (\ref{pbSep}) is concave.
\end{lemma}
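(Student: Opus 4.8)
The plan is to decompose each summand $u_t$ of $U$ into an affine part and a multiple of the function $x\mapsto x\,\overline{g}(x)$, exploit the convexity hypothesis, and then assemble the separable sum. First I would write, for each $t$,
\[
u_t(x_t)=\big[p_t-c_t\,\overline{g}(x_t)\big]x_t = p_t\,x_t - c_t\,\big(x_t\,\overline{g}(x_t)\big),
\]
and recall that $c_t=p_t-q_t>0$ by assumption. The map $x_t\mapsto p_t\,x_t$ is affine, hence concave; by hypothesis $x\mapsto x\,\overline{g}(x)$ is strictly convex on $\RR_+$, so $x_t\mapsto -c_t\,\big(x_t\,\overline{g}(x_t)\big)$ is strictly concave there (a positive multiple of a strictly convex function, negated). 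The sum of an affine function and a strictly concave function is strictly concave, so each $u_t$ is strictly concave on $\RR_+$.

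Next I would pass from the univariate $u_t$'s to the multivariate $U$. Since $U(x)=\sum_{t=1}^{T}u_t(x_t)$ is \emph{separable}, strict concavity transfers directly: for distinct $x,y\in\RR_+^T$ and $\lambda\in(0,1)$ one has $u_t(\lambda x_t+(1-\lambda)y_t)\ge \lambda u_t(x_t)+(1-\lambda)u_t(y_t)$ for every $t$, with strict inequality for at least one index $t$ (namely any $t$ with $x_t\ne y_t$, which exists because $x\ne y$); summing over $t$ gives strict concavity of $U$ on $\RR_+^T$.

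Finally, to conclude that problem (\ref{pbSep}) is concave in the optimization sense, I would note that the feasible set $\CC$ is convex: it is the intersection of the affine hyperplane $\{\sum_t x_t = N\}$ with the box $\RR_+^T$ (indeed with $[0;N]^T$, as in the proof of Lemma \ref{lemCompactC}), both of which are convex. Hence (\ref{pbSep}) is the maximization of a strictly concave function over a nonempty compact convex set, i.e.\ a concave program. I do not anticipate a genuine obstacle here; the only point requiring a line of care is the transfer of \emph{strict} concavity from the one-dimensional factors to the separable sum, which is handled by the elementary argument above, and the (routine) observation that $\CC$ is convex.
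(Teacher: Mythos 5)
Your proposal is correct and follows essentially the same route as the paper: write $u_t(x)=p_t\,x-c_t\,[x\,\overline{g}(x)]$, use $c_t>0$ and the strict convexity of $x\mapsto x\,\overline{g}(x)$ to get strict concavity of each $u_t$, then sum. Your added care in transferring \emph{strict} concavity to the separable sum and in noting the convexity of $\CC$ only makes explicit what the paper leaves as ``straightforward.''
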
 
\begin{proof}
	It is straightforward from $u_t$ definition:
	\begin{equation*}
		u_t(x)=p_t\,x-c_t\,[x\,\overline{g}(x)]
	\end{equation*}
For all $t$, $c_t\geq 0$, hence  $-c_t\,[x\,\overline{g}]$ and consequently $u_t$ are strictly concave as the sum of a concave and linear functions.
Therefore $U$ is also concave as the sum of concave functions.\\
\end{proof}	
\begin{lemma}Concave functions selected for numerical experiments (cf. section \ref{sectionDefPb}): $g(x) = 1 - \frac 1 {1+x}$, $g(x) = 1 - \frac 2 {1+\sqrt{1+x}}$ or $g(x) = \frac 2 {\pi} \mbox{arctan}(x)$ satisfy the assumptions of lemma \ref{lemXG}.\\
\end{lemma}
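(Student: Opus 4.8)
The plan is to verify, for each of the three candidate penalty functions, that the map $\phi: x \mapsto x\,\overline{g}(x)$ is strictly convex on $\RR_+$, and then invoke Lemma \ref{lemXG} directly. Since each $\overline{g}$ is twice differentiable on $\RR_+$, the cleanest route is to compute $\phi''(x) = 2\,\overline{g}'(x) + x\,\overline{g}''(x)$ and check $\phi''(x) > 0$ for all $x \geq 0$; this is the standard second-derivative criterion for strict convexity of a one-variable function. First I would record the derivatives $\overline{g}'$ and $\overline{g}''$ in each case.

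For $g(x) = 1 - \frac{1}{1+x}$ one has $\overline{g}'(x) = (1+x)^{-2} > 0$ and $\overline{g}''(x) = -2(1+x)^{-3} < 0$, so $\phi''(x) = 2(1+x)^{-2} - 2x(1+x)^{-3} = 2(1+x)^{-3}\big[(1+x) - x\big] = 2(1+x)^{-3} > 0$. For $g(x) = 1 - \frac{2}{1+\sqrt{1+x}}$, writing $s = \sqrt{1+x}$, one gets $\overline{g}'(x) = \frac{1}{s(1+s)^2}$ and then $\overline{g}''(x)$ is negative; the same substitution reduces $\phi''$ to a manifestly positive rational expression in $s \geq 1$. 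For $g(x) = \frac{2}{\pi}\arctan(x)$, one has $\overline{g}'(x) = \frac{2}{\pi}\,\frac{1}{1+x^2}$ and $\overline{g}''(x) = -\frac{2}{\pi}\,\frac{2x}{(1+x^2)^2}$, so $\phi''(x) = \frac{2}{\pi}\Big[\frac{2}{1+x^2} - \frac{2x^2}{(1+x^2)^2}\Big] = \frac{4}{\pi(1+x^2)^2}\big[(1+x^2) - x^2\big] = \frac{4}{\pi(1+x^2)^2} > 0$. In all three cases $\phi'' > 0$ on $\RR_+$, hence $\phi$ is strictly convex, and Lemma \ref{lemXG} applies.

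An alternative that avoids the second case's bookkeeping is to note that each listed $g$ is concave (indeed that is asserted in the text), so $\overline{g}'$ is nonincreasing; combined with $\overline{g} \geq 0$ and $\overline{g}' > 0$, this does not by itself give convexity of $\phi$, so one genuinely must inspect the balance between $2\overline{g}'(x)$ and $x|\overline{g}''(x)|$. The one mild obstacle is therefore the middle function, where the nested square root makes $\overline{g}''$ and hence $\phi''$ slightly awkward to write; the substitution $s = \sqrt{1+x}$ (so $x = s^2 - 1$, $s \ge 1$) turns everything into a rational function of $s$ whose sign is transparent. Everything else is a routine differentiation, and no further convexity assumption on $\overline{g}$ is needed beyond what is already in force in the excerpt.
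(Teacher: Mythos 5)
Your proposal is correct and follows exactly the paper's (very terse) argument: verify strict convexity of $x\mapsto x\,\overline{g}(x)$ by direct computation of its second derivative for each of the three functions, then invoke Lemma \ref{lemXG}. For the middle function your claimed simplification does indeed go through: with $s=\sqrt{1+x}$ one finds $\big[x\,\overline{g}(x)\big]''=\frac{1}{2s^3}=\frac{1}{2(1+x)^{3/2}}>0$, so the only step you left implicit checks out.
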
 
\begin{proof} By straightforward computation of $\big[x\,\overline{g}(x)\big]''$
\end{proof}
\subsection{Resolution of the separated problem}  
We compute the Lagrangian function (\ref{pbSep}) and solve it for stationary points. Constraint qualification condition is satisfied by linearity of the constraint, and problem \ref{pbSep} is convex. Therefore, resolution of Lagrange equations provides a global maximum.
\begin{align*}
	L(x,\lambda)=&\displaystyle\sum_{t=1}^{T}\Big[p_t- c_t\cdot \overline{g}(x_t)\Big]x_t - \lambda\Big(\displaystyle\sum_{t=1}^{T}x_t - N\Big)\\
	\nabla L(x,\lambda)=&0 \iff \left\{
	\begin{array}{l}
		\forall t, \big[x\,\overline{g}\big]'(x_t)=\dfrac{p_t-\lambda}{c_t}\\
		\displaystyle\sum_{t=1}^{T}x_t-N = 0 \\
	\end{array}
	\right.\\ 
\end{align*}
Focusing on $x\,\overline{g}$ as a univariate function, we get:
\begin{align*}
	x\,\overline{g}(x)=&\dfrac{p_t-\lambda}{c_t}\,x+b\\
	g\mbox{ is } \CC^0 \mbox{ in 0}&\mbox{, hence: } b=0\\
	\mbox{So: }x\,\overline{g}(x)=&\dfrac{p_t-\lambda}{c_t}\,x\\
\end{align*}
 We now introduce sufficient conditions to solve the Lagragian equations, when
 the price vector $p$ and consequently vector $c$ are constant. 
\begin{lemma}\label{lemCstPrice}
	 Under the assumption of Lemma \ref{lemXG}, if the price vector $p$ and consequently vector $c$ are constant, then the optimal strategy is $\hat{x}=(\frac{N}{T},\cdots,\frac{N}{T})$ and global maximum is  $\mbox{UB}_2=N\,\Big[p-c\,g\big(\frac{N}{T}\big)\Big]$
\end{lemma}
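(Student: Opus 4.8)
The plan is to exploit the structure of the stationarity conditions obtained just above the statement. Since the price vector is constant, write $p_t = p$ and $c_t = c$ for all $t$. The Lagrange equations then reduce to the single equation $[x\,\overline{g}]'(x_t) = \frac{p-\lambda}{c}$ holding simultaneously for every coordinate $t$, together with the constraint $\sum_{t=1}^T x_t = N$. The first observation is that the right-hand side $\frac{p-\lambda}{c}$ does not depend on $t$; hence all coordinates $x_1,\dots,x_T$ must be solutions of the \emph{same} scalar equation $[x\,\overline{g}]'(x) = \frac{p-\lambda}{c}$.

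The key step is an injectivity argument. Under the assumption of Lemma \ref{lemXG}, the map $x \mapsto x\,\overline{g}(x)$ is strictly convex on $\RR_+$, so its derivative $[x\,\overline{g}]'$ is strictly increasing, hence injective on $\RR_+$. Consequently the scalar equation $[x\,\overline{g}]'(x) = \frac{p-\lambda}{c}$ has at most one solution, which forces $x_1 = x_2 = \dots = x_T$. Combining this with the admissibility constraint $\sum_{t=1}^T x_t = N$ yields immediately $x_t = N/T$ for all $t$, i.e. $\hat{x} = (N/T,\dots,N/T)$. One should also note that this point is feasible ($\hat x \in \CC$, since $N/T \ge 0$) and that the corresponding $\lambda$ is uniquely determined by $\lambda = p - c\,[x\,\overline{g}]'(N/T)$, so the stationary point genuinely exists.

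It then remains to argue that this stationary point is the global maximum and to compute its value. For the first part I would invoke the concavity established in Lemma \ref{lemXG}: problem (\ref{pbSep}) maximizes a (strictly) concave function $U$ over the convex compact set $\CC$, and the constraint qualification holds by linearity of the constraint, so any KKT point is a global maximizer — exactly the justification already stated in the text preceding the lemma. Hence $\hat x$ attains $\mbox{UB}_2$. For the value, substitute $x_t = N/T$ into $U(x) = \sum_{t=1}^T [p - c\,\overline{g}(x_t)]\,x_t$; since all terms are equal this gives $U(\hat x) = T \cdot [p - c\,\overline{g}(N/T)]\cdot \frac{N}{T} = N\,[p - c\,g(\frac{N}{T})]$, which is the claimed expression.

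The main obstacle, and really the only non-routine point, is the injectivity step: one must be careful that strict convexity of $x\mapsto x\,\overline{g}(x)$ is exactly what guarantees that $[x\,\overline{g}]'$ is strictly monotone and therefore that the per-coordinate stationarity equations admit a common and unique solution; without this hypothesis one could in principle have stationary points with unequal coordinates. Everything else — feasibility of $\hat x$, the KKT-implies-global argument, and the final substitution — is immediate given the results already proved in the excerpt.
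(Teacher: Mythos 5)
Your proof is correct and follows essentially the same route as the paper: stationarity of the Lagrangian with constant $p$ and $c$, strict convexity of $x\mapsto x\,\overline{g}(x)$ making $[x\,\overline{g}]'$ strictly increasing hence injective so all coordinates coincide, then concavity plus the linear constraint qualification to promote the stationary point to a global maximum, and direct substitution for the value. If anything, your argument is slightly cleaner, since you work directly with the pointwise KKT conditions and skip the paper's detour through integrating $[x\,\overline{g}]'$ and the ensuing ``$x_t=0$ or $x_t=\overline{g}^{-1}\big(\frac{p-\lambda}{c}\big)$'' casework.
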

\begin{proof}
Lagrange equations yielded $x\,\overline{g}(x)=\dfrac{p-\lambda}{c}\,x$.\\ 
Function $g$ is strictly increasing and continuous on $\RR_+$, so it is injective. Hence, t being fixed, either $x_t=0$ or $x_t=\overline{g}^{-1}\big(\frac{p-\lambda}{c}\big)$, which is a unique value independent of t.\\
In addition, under the assumption of Lemma \ref{lemXG}, $x\cdot g$ is strictly convex, so $[x\cdot g]'$ is strictly increasing on $\RR_+$ and therefore injective.    
Let $x=(x_1,\cdots,x_T)$ satisfying Lagrange equations and $i<j$:
\[\big[x\,\overline{g}\big]'(x_i)=\big[x\,\overline{g}\big]'(x_j)=\dfrac{p-\lambda}{c}\]  
Therefore $x_i=x_j$. Since the null vector does not satisfy the constraint, we are left with $\forall, t,\; x_t=\frac{N}{T},\lambda=p-c\,g(\frac{N}{T})$, and $\mbox{UB}_2$ is obtained by direct computation of $U$.
\end{proof}
\noindent When price are not constant, we can still get an upper bound. Indeed, let $\overline{p}=\displaystyle\max_t{p_t}$ and $\underline{c}=\displaystyle\min_{t}{c_t}$. Then for all t: $u_t(x_t)\leq\Big[\overline{p}-\underline{c}\,\overline{g}(x_t)\Big]x_t$\\ 
By applying Lemma \ref{lemCstPrice} to the right hand side, we get:
\begin{lemma}\label{lemNonCst}
	Under Lemma \ref{lemXG} assumptions, with $\overline{p}=\displaystyle\max_t{p_t}$ and $\underline{c}=\displaystyle\min_{t}{c_t}$, the following inequality hold: 
	\[\mbox{UB}_2\leq N\,\Big[\overline{p}-\underline{c}\,g\big(\frac{N}{T}\big)\Big]=\overline{\mbox{UB}_2}\]
\end{lemma}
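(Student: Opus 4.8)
The plan is to derive Lemma \ref{lemNonCst} directly from Lemma \ref{lemCstPrice} by a pointwise majorization of the separated objective $U$, followed by an application of the constant-price result to the majorizing function. First I would fix $\overline{p}=\max_t p_t$ and $\underline{c}=\min_t c_t$, and observe that since $x_t\geq 0$, $\overline{g}\geq 0$, and $c_t\geq\underline{c}>0$, for every $t$ and every $x_t\in\RR_+$ we have $u_t(x_t)=[p_t-c_t\,\overline{g}(x_t)]x_t\leq[\overline{p}-\underline{c}\,\overline{g}(x_t)]x_t$, because replacing $p_t$ by the larger $\overline{p}$ only increases the term $p_t x_t$, while replacing $c_t$ by the smaller $\underline{c}$ only increases $-c_t\,\overline{g}(x_t)x_t$ (the quantity $\overline{g}(x_t)x_t$ being nonnegative). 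Summing over $t$ gives $U(x)\leq \widetilde{U}(x):=\sum_{t=1}^T[\overline{p}-\underline{c}\,\overline{g}(x_t)]x_t$ for all $x\in\CC$.

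Next I would take the maximum over $x\in\CC$ on both sides, obtaining $\mathrm{UB}_2=\max_{x\in\CC}U(x)\leq\max_{x\in\CC}\widetilde{U}(x)$. The function $\widetilde{U}$ is exactly the separated objective $U$ associated to the \emph{constant} price vector $(\overline{p},\dots,\overline{p})$ and constant spread vector $(\underline{c},\dots,\underline{c})$; note $\underline{c}>0$ so the hypotheses carry over. Under the assumption of Lemma \ref{lemXG}, $x\mapsto x\,\overline{g}(x)$ is strictly convex, so Lemma \ref{lemCstPrice} applies verbatim to this constant-price instance and yields $\max_{x\in\CC}\widetilde{U}(x)=N\big[\overline{p}-\underline{c}\,g(\tfrac{N}{T})\big]$. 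Chaining the two inequalities gives $\mathrm{UB}_2\leq N\big[\overline{p}-\underline{c}\,g(\tfrac{N}{T})\big]=\overline{\mathrm{UB}_2}$, which is the claim. Combined with the earlier chain $\mathrm{OPT}\leq\mathrm{UB}_1\leq\mathrm{UB}_2$ this also delivers the announced bound on the original integer problem.

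There is essentially no hard part here — the only thing to be careful about is the direction of each inequality in the pointwise estimate (that one is increasing the objective, not decreasing it), and checking that the constant-price instance still satisfies the strict-convexity hypothesis of Lemma \ref{lemXG}, which it does since that hypothesis is a property of $\overline{g}$ alone and does not involve $p$ or $c$. One might also note for completeness that $\overline{g}=g$ on the relevant integer arguments so writing $g(\tfrac{N}{T})$ versus $\overline{g}(\tfrac{N}{T})$ is harmless (consistent with the notation already used in Lemma \ref{lemCstPrice}). The argument is short enough that it could even be folded into the statement as a one-line remark, but presenting it as a proof keeps the logical dependence on Lemmas \ref{lemXG} and \ref{lemCstPrice} explicit.
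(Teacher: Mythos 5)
Your proposal is correct and follows essentially the same route as the paper: the pointwise bound $u_t(x_t)\leq\big[\overline{p}-\underline{c}\,\overline{g}(x_t)\big]x_t$, summation over $t$, and an application of Lemma \ref{lemCstPrice} to the resulting constant-price problem. Your write-up is in fact slightly more careful than the paper's, since you make explicit the sign argument ($\overline{g}(x_t)x_t\geq 0$) justifying each replacement and check that the hypotheses of Lemma \ref{lemXG} carry over to the majorizing instance.
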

\begin{proof}
\begin{equation*}
	U(x)\leq\displaystyle\sum_{t=1}^{T}\Big[\overline{p}- \underline{c}\cdot \overline{g}(x_t)\Big]x_t
\end{equation*} 
We apply the Lemma \ref{lemCstPrice} to the right hand side problem. 
\end{proof}
We note, $\mbox{UB}_2$ and $\overline{\mbox{UB}_2}$ are equal when prices are constant. Hence, we refer to the latter in numerical experiments. 
	
	\section{Numerical Experiments}\label{sectionNE}
		In previous sections, we described techniques to obtain lower and upper bounds. We now study their numerical performances in terms of quality and CPU time. This section is organized as follows.\\In paragraph \ref{sectionData}, we detail the numerical experiment design for results reproducibility. We describe the machine characteristics, problem parameters selection and the price vector simulations. We also discuss the penalty function calibration process and its underlying motivations.\\Then in paragraph \ref{sectionNESmall}, we present our results for small and medium size instances. We start by introducing metrics, table notations and define instances size.\\
\indent In paragraph \ref{sectionNEDPExact}, we present the exact resolution via DP and discuss its applicability. 
While naive heuristics and ILS algorithm can be used straightforwardly, two-step approaches performance depends on their grain. So they require a proper setup which is discussed in the two following subsections. \\\indent In paragraph \ref{sectionNETSCoarse}, we describe the two-step approach based on coarse grain DP. We present its optimal coverage ratio and discuss its complexity as a function of the grain.
\\\indent  Then, in paragraph \ref{sectionNETSCont}, we compare our results to two-step approach based on continuous solutions.\\
\indent In paragraph \ref{sectionNEAggrSmall}, having fine tuned our two-step heuristics,  we present aggregated results for small and medium size instances, with both lower and upper bounds. We discuss quality and CPU time, in particular as stock prices fluctuate.\\
In paragraph \ref{sectionNELarge}, we move on to large instances for which exact resolution is not available. We shortly discuss time and memory limitations of our algorithms. We present quality and CPU time results providing a gap, although not tight, for the optimal value of the initial problem. (\ref{pbMain0}), when both lower and upper bounds are available. As for small and medium instances, we present representative examples to show the influence of stock price variation.
		\subsection{Experiment setup and penalty function calibration}\label{sectionData}
		We begin with the machine characteristics and softwares.\\\\
\textbf{PC characteristics:} numerical experiments were run using a PC with Intel Xeon(R) Silver 4114 at 2.20 Ghz , 2 sockets, 20 core, and 32 Gb of RAM. O/S is Linux Ubuntu 18.04 (Bionic Beaver) and c++ compiler gcc v9.3.0\\\\
\textbf{Commercial solvers:} LocalSolver 9.5 (v9.5, Linux64, build 20201030) and NLopt v2.6.2 
Then, we set up the parameters of the problem.\\\\
\textbf{Problem size} $(T, N)=(10^{a},10^{b}),\; a<b$, with $1\leq a\leq 3$ and $2\leq b\leq 9$ are labeled in the results tables. In particular, T divides N, as there is no specific interest to deal with odd blocks. When there is no ambiguity in the results tables, we further simplify notations by ignoring the base and write $(T,N)=(a,b)$. \\\\ 
\textbf{Minimum sale price} $q_t$: we assume $q_t=(1-\beta)p_t$, with $0<\beta<1$ constant. It means the minimum sale price for largest volumes (even a fire sale) is equal to a constant fraction of the current price. It can be interpreted as the intrinsic value of the company\footnote{An economical discussion about intrinsic value is beyond the scope of this paper, and not relevant for mathematical programming}. For numerical experiments, we set $\beta =0.9$. As $g(0)=0$ and g goes asymptotically to 1, this corresponds to a 90\% price floor. The closer the $\beta$ to 1, the stiffer is the penalty for selling more stocks, and the wider the penalty range (executed price $\in \big[(1-\beta)\,p_t;p_t\big]$). Consequently, a high value for $\beta$ leads to a significant difference in the objective value function, between a poor and a good selling strategy.\\\\ 		
We now shift our focus toward the asset price. We describe the stock price dynamic and the generation of price vectors.\\
\textbf{Stock price dynamic:} we obtain stock prices through simulations using a classical Geometric Brownian Motion stochastic process as in \cite{hull_2002}, starting at $p_0=100$, with moments $(\mu,\sigma)$:
		\[\dfrac{dS_t}{S_t}=\mu\,dt+\,\sigma\,dW_t\]
	By Ito's Lemma, we compute $d\left(\ln S_t\right)=\Big(\mu -\dfrac{\sigma^2}{2}\Big)dt+\sigma\,dW_t $
		
	Integrating over the time interval $[t;t+\Delta t]$, we get:
	\[ S_{t+\Delta t}=S_t \exp\left[ \Big(\mu -\dfrac{\sigma^2}{2}\Big)\Delta t+\sigma\,(W_{t+\Delta t}- W_t) \right]\]
	By stationariness of the Brownian motion:
	\[ S_{t+\Delta t}=S_t \exp\left[ \Big(\mu -\dfrac{\sigma^2}{2}\Big)\Delta t+\sigma\,W_{\Delta t} \right]\]
	Which yields a simpler formula for simulation purposes:
	\begin{equation}
		S_{t+\Delta t}=S_t \exp\left[ \Big(\mu -\dfrac{\sigma^2}{2}\Big)\Delta t+\sigma\,\sqrt{\Delta t}\,Z\right] ,\;\hbox{where}\;Z\sim \mathcal{N}(0,1)		
		\label{BMeq}
	\end{equation}\\\\
\textbf{Stock prices set up:} we simulate stock prices using the stochastic process described in equation (\ref{BMeq}), starting at $p_0=100$, with moments $(\mu,\sigma)$. We are interested in different trends, coupled with either a low or high volatility environment. Hence we set of moments:$\mu\in\{-0.05;0;+0.05\}$ and $\sigma\in\{0.10;0.25;0.70\}$, which leads to 9 combinations. We also perform numerical experiments using on a constant stock price $\forall i,\,p_i=100$. It can interpreted as a baseline experiment to compare algorithm performances.\\\\
One could legitimately object that high volatility diffusion processes are not realistic for our deterministic prices model. As mentioned in section \ref{sectionDefPb}, our optimization techniques apply to any price vector. From a mathematical programming standpoint, studying the bounds tightness for a wide range of standard responses is a topic of interest. Hence, we do not intend to draw conclusions for the financial application of problem (\ref{pbMain0}), based on unlikely instances, but rather to privilege a larger scope of application for our techniques, eventually outside of finance. To that end, it is important to validate our results when standard responses fluctuate significantly.\\\\
For each of the 9 sets $(\mu_i,\sigma_i)$, we run 10 simulations and take the average over $S_t$. It makes a smoother price path and better reflects moment characteristics.\\
Lastly, each of prices vector $V_i$ has the maximum $T_{max}=10^3$ cardinal considered for numerical experiments. When T is lower than than $T_{max}$, the corresponding price vector is drawn uniformly from $V_i$. For instance, when T=10, $\hat{V}=\{V_{100},V_{200},\cdots,V_{1000}\}$. One may notice there is a bias in the selection of the size $\lfloor\frac{T_{max}}{T}\rfloor$. We could indeed have arbitrarily chosen $\hat{V}=\{V_{j},V_{100+j},\cdots,V_{900+j}\}$, where $1\leq j\leq 100$. As T increases towards $T_{max}$, this bias fades out.\\      
Finally, we study the calibration process of penalty function g.\\\\ 
\textbf{Calibration of penalty function g:}
\begin{itemize}
	\item We start with the selected function prototypes G: $\begin{aligned}[t]
			\left\{
			\begin{array}{l}
				x \mapsto \displaystyle \frac{x}{1+x}\\
				or\; x\mapsto 1-\displaystyle \frac{2}{1+\sqrt{1+x}}\\
				or\; x\mapsto \displaystyle\frac{2}{\pi}\arctan(x)\\			
			\end{array}
			\right.
		\end{aligned}$\\
	\item G functions are $\mathcal{C}^2$ and bounded on $\mathbb{R}^+$. Indeed, $G(0)=0\;and\;\displaystyle\lim_{+\infty}G=1$
	\item We then define the penalty function $g:x\mapsto G(\eta\, x)$, where $\eta$ is constant and depends only on $G$. Given a level L , we define a threshold H, such that $g(L)=H$. Hence, $\eta$ is a scaling factor aimed to transpose the positive semi-line on the $[0;L]$ segment. Selected G functions are injective on $\mathbb{R}^+$, so we can compute $\eta=\dfrac{G^{-1}(H)}{N}$, where $G^{-1}$ is given respectively by:
		 $\begin{aligned}[t]
		 	\left\{
		 	\begin{array}{l}
		 		x \mapsto \displaystyle \frac{x}{1-x}\\
		 		or\; x\mapsto \displaystyle \tan\big(\frac{\pi}{2}\,x\big)\\	
		 		or\, x\mapsto \displaystyle \frac{4\,x}{(1-x)^2}		
		 	\end{array}
		 	\right.
		 \end{aligned}$\\   
	\item We lastly set threshold $H$. We considered different values for $H \in ]0;1[$. The closer to 1, the more discriminatory power for g. By discriminatory, we mean the distance from naive heuristics, such as the fire sale or the uniform sale, to the optimal, is maximum. For numerical experiments, we settled for $L=N$ and $H=0.99$. Underlying justifications and calibration tables are presented in Appendix \ref{sectionCalib}.\\ 
\end{itemize}
		\subsection{Numerical Experiments for small and medium instances}\label{sectionNESmall}
			\noindent As described in section \ref{sectionData}, we simulated 10 price vectors, with different distributions moments.\\
For tables readability, we display metrics (i.e quality and CPU time) by default for constant prices and on average over the 9 simulated processes.
In the following result tables, CST refers to constant prices and AVG to  average. When results differ materially between price vectors, we mention it explicitly. \\
In results tables measuring bounds quality, figures are expressed in percentage and measure the relative difference to optimal coming from exact DP. For CPU time tables, time is measured in seconds. $\epsilon$ corresponds to the minimum numerical value in both cases, with its corresponding unit. Hence, ''$<\epsilon$ '' means either relative difference to the optimal is inferior to $0.01\%$ or  computation time is faster than  $0.01s$. Lastly, DNC means the algorithm either did not converge within allowed time of 10 minutes, or returned an memory error.\\
\textbf{Instances size}: for numerical experiments, we will consider the instance size as:
\begin{itemize}
	\item Small, if $(T,N)\in\big\{(1,2),(1,3),(1,4),(2,3),(2,4)\big\}$
	\item Medium, if $(T,N)\in\big\{(1,5),(1,6),(2,5),(3,5)\big\}$
	\item Large, if $(T,N)\in\big\{(1,7),(1,8),(1,9),(2,6),(2,7),(2,8),(2,9),(3,6)\big\}$
	\item Very large, when $(T,N)\in\big\{(3,7),(3,8),(3,9)\big\}$
\end{itemize}
We now present the exact resolution for small and medium instances.
\subsubsection{Exact resolution for small and medium instances :}\label{sectionNEDPExact}
we solve exactly problem (\ref{pbMain0}) for small and medium size instances using exact DP algorithm from section \ref{sectionDPExact}. CPU time is displayed in the table below.
\\\\\begin{tabular}{|c|c|c|c|c|c|c|c|}
\hline
\multicolumn{2}{|c|}{CPU}&\multicolumn{2}{|c|}{$\frac{x}{1+x}$}& \multicolumn{2}{|c|}{$\frac{2}{\pi}\arctan(x)$}& \multicolumn{2}{|c|}{$1-\frac{2}{1+\sqrt{1+x}}$}\\
\hline
T&N&CST&AVG&CST&AVG&CST&AVG \\
\hline\hline
$10^1$&$10^2$&$<\epsilon$&$<\epsilon$&$<\epsilon$&$<\epsilon$&$<\epsilon$&$<\epsilon$             \\\hline
$10^1$&$10^3$&0.03&0.02&0.03&0.02&0.03&0.02             \\\hline
$10^1$&$10^4$&1.77&1.77&1.78&1.78&1.77&1.79             \\\hline
$10^1$&$10^5$&181&182&179&178&176&176    								 \\\hline
$10^2$&$10^3$&0.19&0.19&0.18&0.18&0.18&0.18             \\\hline
$10^2$&$10^4$&18.83&18.93&17.68&17.63&17.55&17.53       \\\hline
$10^2$&$10^5$&1829&1777&1786&1763&1752&1754             \\\hline
$10^3$&$10^5$&17 550&17 943&17 576&17 856&17 488&17 949            \\\hline
$10^1$&$10^6$&18 141&18 253&18 261&18 142&17 876&18 440            \\\hline
\end{tabular}

\begin{itemize}
	\item Time complexity is $O(T\,N^2)$ as expected.
	\item For $(T,N)=(2,6)$, we only computed result in the CST case and it takes about $180\; 000\,s$, for each penalty function, in agreement with expected time complexity. 
	\item Time resolution does not depend on $p$ nor on the penalty function.
	It is expected as $(p_1,\cdots,p_T)$ and $ \left(g(1), \cdots, g(T)\right)$ are computed and stored ahead of resolution. 
	\item When $(T,N)\geq (2,5)$ (in a general sense), exact resolution takes from a few hours to a few days. It is hence not tractable for practical applications.  
\end{itemize}
Therefore, exact DP is not suitable for some medium size and large instances. Hence, we now introduce lower bounds results. \\
While ILS and naive heuristics are directly applicable, two-step approach depends on the grain, which controls both bucket and funnel size. We first present results for two-step approach and discuss optimal grain size.

\subsubsection{DP coarse grain and funnel}\label{sectionNETSCoarse}
As discussed in section \ref{sectionLB}, we run a DP with grain $P$ and then refine the solution with the same $\lambda. P$ size funnel. We computed results for $P=10$ to $N/10$, and $\lambda\in\{1,5\}$ (we tested different values of $\lambda\in[0;10]$ and settled for these as the most representative).
We first display the \textbf{optimal coverage ratio} (number of instances where the optimal is reached divided by total number of instances) as a function of the grain P. In the table below, optimal ratio is the first number expressed in percent. The second number, in parenthesis, corresponds to the total number of instances:
\\\\\begin{tabular}{|c|c|c|c|c|c|c|c|c|}
\hline
$\lambda P=$&10&50&100&500&1000&5000&10000&50000    \\\hline
CST&27 (22)&95 (22)&23 (13)&100 (13)&14 (7)&100 (7)&0 (2)&100 (2)     \\\hline
AVG&42 (177)&97 (177)&49 (104)&99 (104)&44 (50)&98 (50)&33 (18)&100 (18)            \\\hline
TOTAL&41 (199)&97 (199)&46 (117)&99 (117)&40 (57)&98 (57)&30 (20)&100 (20)          \\\hline
\end{tabular}

\begin{itemize}
	\item A $5P$ funnel (i.e $\lambda=5,\lambda\ P= 50, 500, 5000$ etc.) provides a better coverage ratio for every P than $\lambda=1$.
	\item Maximum coverage ratio for a significant number of instances is reached for $P=100$ and funnel $\lambda P=500$. Optimal is reached for almost every small and medium size instances. 
	\item Quality is similar for CST and AVG. 
\end{itemize}
We display the CPU time table for that set up $\lambda P=500$: 
\\\\\begin{tabular}{|c|c|c|c|c|c|c|c|}
\hline
\multicolumn{2}{|c|}{CPU}&\multicolumn{2}{|c|}{$\frac{x}{1+x}$}& \multicolumn{2}{|c|}{$\frac{2}{\pi}\arctan(x)$}& \multicolumn{2}{|c|}{$1-\frac{2}{1+\sqrt{1+x}}$}\\
\hline
T&N&CST&AVG&CST&AVG&CST&AVG \\
\hline\hline
$10^1$&$10^2$&$<\epsilon$&$<\epsilon$&$<\epsilon$&$<\epsilon$&$<\epsilon$&$<\epsilon$             \\\hline
$10^1$&$10^3$&0.02&0.017&0.017&0.016&0.016&0.015        \\\hline
$10^1$&$10^4$&0.127&0.108&0.118&0.102&0.143&0.076       \\\hline
$10^1$&$10^5$&0.206&0.199&0.211&0.189&0.204&0.133       \\\hline
$10^2$&$10^3$&0.15&0.15&0.15&0.15&0.151&0.15            \\\hline
$10^2$&$10^4$&2.122&1.795&2.122&1.797&2.121&1.608       \\\hline
$10^2$&$10^5$&13.939&6.817&12.898&6.771&13.997&6.771    \\\hline
$10^3$&$10^5$&213.399&137.701&213.357&138.243&213.378&124.176         \\\hline
$10^1$&$10^6$&2.103&2.003&2.104&2.002&2.035&1.914       \\\hline
$10^2$&$10^6$&39.585&27.914&39.581&28.097&37.319&25.135 \\\hline
\end{tabular}

\begin{itemize}
	\item CPU time is roughly similar for every penalty function.
	\item Resolution remains below a minute up to $(T,N)=(2,6)$
	\item It takes a slightly longer time for constant prices than for averaged batches. Indeed, for volatile process with peaks, optimal strategy consists in liquidating most of the block in the peaks area, leaving most other trading times with very few transactions. Thus, the effective number of time steps is lower.   
\end{itemize}
\textbf{Complexity: } minimizing time complexity in $P$ is also a very interesting topic. Numerically, with $T=10^a, N=10^b, P=10^c$, we set $\lambda=5$ and we apply the complexity results from section \ref{sectionLB}. Coarse grain DP is in $O(10^{a+2b-2c})$ and DP with bounds is in $O(10^{2(a+c+1)})$. Time complexity of the two-step method turns out to be in $O(10^{\max(a+2b-2c,2a+2c+2)})$. Its theoretical minimum is reached for $c=\frac{1}{4}(2b-a-2)$, with minimum time complexity in $O(10^{\frac{3a}{2}+b+1})$ (or $O(T^{\frac{3}{2}}\cdot N)$). If we restrict $c$ to the natural integers, we have to round it to the nearest integer, then $\overline{c}=\lfloor c+0.5\rfloor$, and achieved time complexity is in $O\left(10^{\max(a+2b-2\overline{c},2a+2\overline{c}+2)}\right)$. \\
Compared to exact DP in $O(10^{a+2b})$, we gain a factor $10^{b-\frac{a}{2}-1}$ (or $\frac{N}{10\sqrt{T}}$). Consequently, time improvement made thanks to the two-step method grow when N grows with respect to T. As we only require $P$ to be an integer, but not $c$, we recommend to let $c$ be in $\QQ$ to take full advantage to complexity gain from the two-step method and round $P=\lfloor 10^c\rfloor$.\\We now couple this techniques with the continuously relaxed problem.   

\subsubsection{Two-step approach based on continuous relaxation}\label{sectionNETSCont}
We mentioned in section \ref{sectionLB} that DP with bounds algorithm may also apply to continuous solution. Hence, we use a local maximum of problem (\ref{pbCont}) obtained using NLopt, with $l_t,u_t$ defined at the end of section \ref{sectionCGrain}.  We set the same parameter $(\lambda,\,P)$ as previously for comparison consistency. We first display the optimal ratio with P:
\\\\\begin{tabular}{|c|c|c|c|c|c|c|}
\hline
$\lambda P=$&10&50&100&500&1000&5000            \\\hline
CST&50 (20)&65 (20)&73 (11)&100 (11)&100 (5)&100 (5)  \\\hline
AVG&56 (172)&81 (172)&90 (99)&100 (99)&100 (45)&100 (45)          \\\hline
TOTAL&56 (192)&79 (192)&88 (110)&100 (110)&100 (50)&100 (50)      \\\hline
\end{tabular}

\begin{itemize}
	\item Continuous relaxation with $\lambda P=500$ reached the optimal for every small and medium size instances.   
\end{itemize}
We display similarly the CPU time table for that set up $\lambda P=500$: 
\\\\\begin{tabular}{|c|c|c|c|c|c|c|c|}
\hline
\multicolumn{2}{|c|}{CPU}&\multicolumn{2}{|c|}{$\frac{x}{1+x}$}& \multicolumn{2}{|c|}{$\frac{2}{\pi}\arctan(x)$}& \multicolumn{2}{|c|}{$1-\frac{2}{1+\sqrt{1+x}}$}\\
\hline
T&N&CST&AVG&CST&AVG&CST&AVG \\
\hline\hline
$10^1$&$10^2$&0.578&0.417&0.38&0.444&0.333&0.503        \\\hline
$10^1$&$10^3$&0.337&0.433&0.368&0.413&0.252&0.469       \\\hline
$10^1$&$10^4$&0.403&0.511&0.565&0.492&0.336&0.49        \\\hline
$10^1$&$10^5$&0.198&0.267&0.201&0.251&0.192&0.206       \\\hline
$10^2$&$10^3$&0.88&1.525&0.864&1.45&0.446&1.916         \\\hline
$10^2$&$10^4$&2.196&2.775&2.135&2.703&2.167&2.842       \\\hline
$10^2$&$10^5$&12.531&6.484&12.463&6.517&12.73&5.04      \\\hline
$10^3$&$10^5$&219.582&132.47&221.32&127.751&224.953&87.648            \\\hline
$10^1$&$10^6$&0.198&0.182&0.198&0.182&0.206&0.141       \\\hline
$10^2$&$10^6$&19.348&10.015&19.405&10.175&17.889&7.15   \\\hline
\end{tabular}
\begin{itemize}
	\item Two-step approach coupled with continuous relaxation revolves around the same CPU time as its coarse grain variation. Differences lie in the first step (NLopt heuristic vs. coarse grain DP), while DP with bounds complexity is relatively unchanged as is the funnel size. We however note that for $(T,N)= (1,6)$ or $(2,6)$, two-step with NLopt goes much faster than coarse grain counterpart. 
\end{itemize}
The two-step approaches, discrete or continuous, are now clearly defined. In the next section, we present aggregated results, for all our algorithms applied to small and medium size instances.
\subsubsection{Aggregated results for small and medium size instances}\label{sectionNEAggrSmall}
Having fine tuned two-step approaches, we can now compare our lower and upper bounds. The next two tables below present quality and CPU time. We introduce a few notations for table readability.\\
\textbf{Table notations :} \textbf{FS} refers to the naive fire sale heuristic, \textbf{US} to the uniform sale,  \textbf{TS1} stands for two-step with coarse grain DP, \textbf{TS2} relates to two-step with NLOpt heuristic, \textbf{ILS} corresponds to the discrete gradient described in section (\ref{sectionILS}) and initialized with uniform sale. Finally, \textbf{LS} refers to LocalSolver ran with approximately the same time limit as the best lower bound (capped to 10 minutes). \textbf{UB} corresponds to the upper bound $\overline{\mbox{UB}_2}$.\\Results profiles are similar for different penalty functions, hence we only display results for prototype $G=\frac{2}{\pi}arctan(x)$ for CST and AVG price batches, which are presented in the following two tables:
\\\\\begin{tabular}{|c|c|c|c|c|c|c|c|c|}
\hline
\multicolumn{2}{|c|}{Quality}&\multicolumn{7}{|c|}{CST, $\frac{2}{\pi}arctan(x)$}\\
\hline
T&N&FS&US&ILS&LS&TS1&TS2&UB \\\hline
$10^1$&$10^2$&20.42&7.81&0.41&0&0&0&38.19 \\\hline
$10^1$&$10^3$&20.52&7.92&0.02&$<\epsilon$&0&0&38.02     \\\hline
$10^1$&$10^4$&20.52&7.92&$<\epsilon$&$<\epsilon$&0&0&38.02            \\\hline
$10^1$&$10^5$&20.52&7.92&$<\epsilon$&$<\epsilon$&0&0&38.02            \\\hline
$10^1$&$10^6$&20.52&7.92&$<\epsilon$&$<\epsilon$&0&0&38.02            \\\hline
$10^2$&$10^3$&25&2.12&1.04&$<\epsilon$&0&0&364.61       \\\hline
$10^2$&$10^4$&25&2.14&0.13&0.01&0&0&364.56              \\\hline
$10^2$&$10^5$&25.01&2.14&$<\epsilon$&0.01&0&0&364.56    \\\hline
$10^3$&$10^5$&25.49&0.23&0.12&0.10&0&0&558.72            \\\hline
\end{tabular}\\
\\\begin{tabular}{|c|c|c|c|c|c|c|c|c|}
\hline
\multicolumn{2}{|c|}{Quality}&\multicolumn{7}{|c|}{AVG, $\frac{2}{\pi}arctan(x)$}\\
\hline
T&N&FS&US&ILS&LS&TS1&TS2&UB \\\hline
$10^1$&$10^2$&21.66&10.47&0.96&0&0&0&139.4              \\\hline
$10^1$&$10^3$&21.77&10.58&0.72&$<\epsilon$&0&0&139.07   \\\hline
$10^1$&$10^4$&21.77&10.59&0.71&$<\epsilon$&0&0&139.07   \\\hline
$10^1$&$10^5$&21.77&10.59&0.71&$<\epsilon$&0&0&139.07   \\\hline
$10^1$&$10^6$&21.77&10.59&0.71&$<\epsilon$&0&0&139.07   \\\hline
$10^2$&$10^3$&28.48&7.77&5.26&$<\epsilon$&0&0&419.26    \\\hline
$10^2$&$10^4$&28.49&7.8&4.04&0.01&0&0&419.13            \\\hline
$10^2$&$10^5$&28.50&7.8&3.97&0.01&0&0&419.13             \\\hline
$10^3$&$10^5$&30.49&5.98&7.39&0.14&0&0&576.05           \\\hline
\end{tabular}

\begin{itemize}
	\item As expected, naive heuristic FS and US yields the worst lower bounds. FS remains around $20\%-25\%$ lower than the optimal, while US gets tighter when T increases. When T and N are of the same order of magnitude, there is enough time to liquidate the block, and the strategy becomes less relevant. Hence a simple uniform sale gets close to be optimal.
	\item LocalSolver returns a very good lower bound for every instance, but rather seldom the optimum. Its optimal coverage ratio is about 10\%  for both CST and AVG.
	\item TS1 and TS2 reached the optimum almost every time for small and medium size instances.
	\item UB is not tight even for small instances. It is stable in $N$, but becomes materially looser when $T$ increases.
	
	\item ILS algorithm performs better for CST than for AVG, especially as instance size grows. For AVG, it performs only marginally better than US. In addition, for AVG results differs significantly among instances.	
\end{itemize}
 We provide a representative example for $(T,N)=(3,5)$ in the following table:
\\\\\begin{tabular}{|c|c|c|}
\hline
Quality&\multicolumn{2}{|c|}{$\frac{2}{\pi}arctan(x)$}	\\\hline
\multicolumn{3}{|c|}{T=$10^3$, N=$10^5$}	\\\hline\hline
$\mu$&$\sigma$&ILS           \\\hline
-0.05  &0.10&2.13         \\\hline
-0.05  &0.25&3.46         \\\hline
-0.05  &0.70&8.76         \\\hline
0.00  &0.10&2.06        \\\hline
0.00  &0.25&5.02         \\\hline
0.00  &0.70&13.67         \\\hline
+0.05  &0.10&3.66         \\\hline
+0.05  &0.25&5.33         \\\hline
+0.05  &0.70&22.42         \\\hline
\end{tabular}

\begin{itemize}
	\item ILS performs materially worse for high volatility instances with larger peaks rather than smoother ones. Those instances concentrate most of their liquidation around price peaks. We conclude that shifting adjacent time steps is less efficient for these profiles.
	\end{itemize}	
 Lastly, we display the equivalent CPU time tables:
\\\\\begin{tabular}{|c|c|c|c|c|c|c|c|c|}
\hline
\multicolumn{2}{|c|}{CPU}&\multicolumn{7}{|c|}{CST, $\frac{2}{\pi}arctan(x)$}\\
\hline
T&N&FS&US&ILS&LS&TS1&TS2&UB \\\hline
$10^1$&$10^2$&$<\epsilon$&$<\epsilon$&$<\epsilon$&10&$<\epsilon$&0.38&$<\epsilon$   \\\hline
$10^1$&$10^3$&$<\epsilon$&$<\epsilon$&$<\epsilon$&10&0.02&0.37&$<\epsilon$          \\\hline
$10^1$&$10^4$&$<\epsilon$&$<\epsilon$&$<\epsilon$&10&0.12&0.57&$<\epsilon$          \\\hline
$10^1$&$10^5$&$<\epsilon$&$<\epsilon$&$<\epsilon$&10&0.21&0.20&$<\epsilon$           \\\hline
$10^1$&$10^6$&$<\epsilon$&$<\epsilon$&$<\epsilon$&10&2.10&0.20&$<\epsilon$            \\\hline
$10^2$&$10^3$&$<\epsilon$&$<\epsilon$&0.01&10&0.15&0.86&$<\epsilon$   \\\hline
$10^2$&$10^4$&$<\epsilon$&$<\epsilon$&0.22&10&2.12&2.14&$<\epsilon$   \\\hline
$10^2$&$10^5$&$<\epsilon$&$<\epsilon$&0.73&600&12.90&12.46&$<\epsilon$ \\\hline
$10^3$&$10^5$&$<\epsilon$&$<\epsilon$&35.42&600&213.36&221.32&$<\epsilon$           \\\hline
\end{tabular}
\\
\\\\\begin{tabular}{|c|c|c|c|c|c|c|c|c|}
\hline
\multicolumn{2}{|c|}{CPU}&\multicolumn{7}{|c|}{AVG, $\frac{2}{\pi}arctan(x)$}\\
\hline
T&N&FS&US&ILS&LS&TS1&TS2&UB \\\hline
$10^1$&$10^2$&$<\epsilon$&$<\epsilon$&$<\epsilon$&10&$<\epsilon$&0.44&$<\epsilon$   \\\hline
$10^1$&$10^3$&$<\epsilon$&$<\epsilon$&$<\epsilon$&10&0.02&0.41&$<\epsilon$          \\\hline
$10^1$&$10^4$&$<\epsilon$&$<\epsilon$&$<\epsilon$&10&0.11&0.49&$<\epsilon$          \\\hline
$10^1$&$10^5$&$<\epsilon$&$<\epsilon$&$<\epsilon$&10&0.19&0.25&$<\epsilon$          \\\hline
$10^1$&$10^6$&$<\epsilon$&$<\epsilon$&$<\epsilon$&10&2.02&0.18&$<\epsilon$          \\\hline
$10^2$&$10^3$&$<\epsilon$&$<\epsilon$&$<\epsilon$&10&0.15&1.45&$<\epsilon$          \\\hline
$10^2$&$10^4$&$<\epsilon$&$<\epsilon$&0.07&10&1.8&2.7&$<\epsilon$     \\\hline
$10^2$&$10^5$&$<\epsilon$&$<\epsilon$&0.14&600&6.77&6.52&$<\epsilon$  \\\hline
$10^3$&$10^5$&$<\epsilon$&$<\epsilon$&2.92&600&138.24&127.75&$<\epsilon$            \\\hline
\end{tabular}

\begin{itemize}
	\item For small and medium size instances, every algorithm converges relatively quickly (within a couple of minutes).
	\item Besides naive heuristics, ILS is the fastest algorithm (but does not necessarily yield a good result), followed by two-step approaches. 
	\item UB consists of a straightforward formula which is almost instantaneous for all $(T,N)$.
\end{itemize}
We covered the small and medium size instances and we are now interested in applying our techniques to large instances.
		\subsection{Numerical Experiments for large instances}\label{sectionNELarge}
			At this scale, exact DP is not available. Moreover several algorithms do not converge for (very) large instances in the allowed time or are subject to memory constraint. We start by discussion these limitations.\\ 
\textbf{Memory limitations and potential improvements:} a double takes 8 bytes in the heap memory and space complexity of DP based algorithms is in $O(T\,N)$. A quick computation shows that, for our 32Gb RAM computer, the limit is $\frac{\ln(T\,N)}{\ln(10)}\leq 2^{32}\frac{\ln(2)}{\ln(10)}\approx 9.63<10$. Therefore, in our numerical experiments, we can't go any further than $(T,N)=(1,8),\;(2,7)$ or $(3,6)$ for two-step methods.\\ 
Indeed, within the DP with bounds algorithm, the search for the best solution occurs within a bounded funnel of size R, for which we showed space complexity is $O(T^2\,R)$. Hence, a T by N sparse matrix wastes too much memory, while a leaner data structure could save memory. The corresponding space gain is in the order of $\frac{T\,N}{T^2\,R}=\frac{N}{2\lambda\,P\,T}$. For our numerical experiments with (very) large instances, $N\gg T$, by a factor at least $10^3$ (and up to $10^6$). Therefore, this improved data structure would make sense. We leave it as a perspective in section \ref{sectionCCL}.
\textbf{Results tables presentation:} in the bound quality table presented below, the best lower bound for each instance is specified in the third column, against which quality is defined, as the relative value to the best known lower bound. 
\\\\\begin{tabular}{|c|c|c|c|c|c|c|c|c|c|}
\hline
\multicolumn{3}{|c|}{Quality}&\multicolumn{7}{|c|}{CST, $\frac{2}{\pi}arctan(x)$}\\
\hline
T&N&BEST LB&FS&US&ILS&LS&TS1&TS2&UB       \\\hline
$10^1$&$10^7$&TS1 (5000)&20.52&7.92&$<\epsilon$&$<\epsilon$&0&$<\epsilon$&38.02     \\\hline
$10^1$&$10^8$&TS1 (5000)&20.52&7.92&$<\epsilon$&$<\epsilon$&0&$<\epsilon$&38.02      \\\hline
$10^1$&$10^9$&ILS&20.52&7.92&0&$<\epsilon$&DNC&DNC&38.02           \\\hline
$10^2$&$10^6$&TS1 (5000)&25.01&2.14&$<\epsilon$&0.01&0&$<\epsilon$&364.56 	 \\\hline
$10^2$&$10^7$&TS1 (5000)&25.01&2.14&$<\epsilon$&0.01&0&$<\epsilon$&364.56     \\\hline
$10^2$&$10^8$&ILS&25.01&2.14&0&$<\epsilon$&DNC&DNC&364.56             \\\hline
$10^2$&$10^9$&ILS&25.01&2.14&0&$<\epsilon$&DNC&DNC&364.56 \\\hline
$10^3$&$10^6$&TS1 (500)&25.49&0.23&0.01&0.07&0&$<\epsilon$&558.72   \\\hline
$10^3$&$10^7$&ILS&25.49&0.23&0&0.07&DNC&DNC&558.72             \\\hline
$10^3$&$10^8$&ILS&25.49&0.23&0&0.07&DNC&DNC&558.72            \\\hline
$10^3$&$10^9$&ILS&25.49&0.23&0&0.07&DNC&DNC&558.72            \\\hline
\end{tabular}
\\\\\\
\begin{tabular}{|c|c|c|c|c|c|c|c|c|c|}
\hline
\multicolumn{3}{|c|}{Quality}&\multicolumn{7}{|c|}{AVG, $\frac{2}{\pi}arctan(x)$}\\
\hline
T&N&BEST LB&FS&US&ILS&LS&TS1&TS2&UB       \\\hline
$10^1$&$10^7$&TS1 (5000)&21.77&10.59&0.71&$<\epsilon$&0&$<\epsilon$&139.07          \\\hline
$10^1$&$10^8$&TS1 (5000)&21.77&10.59&0.71&$<\epsilon$&0&$<\epsilon$&139.07          \\\hline
$10^1$&$10^9$&LS&21.77&10.59&0.71&0&DNC&DNC&139.07 \\\hline
$10^2$&$10^6$&TS1 (500)&28.49&7.8&3.97&$<\epsilon$&0&$<\epsilon$&419.13    \\\hline
$10^2$&$10^7$&TS1 (5000)&28.49&7.8&3.97&0.01&0&$<\epsilon$&419.13     \\\hline
$10^2$&$10^8$&LS&28.49&7.79&3.97&0&DNC&DNC&419.16       \\\hline
$10^2$&$10^9$&LS&28.43&7.72&4.33&0&DNC&DNC&443.43       \\\hline
$10^3$&$10^6$&TS1 (500)&30.35&7.87&7.19&0.29&0&$<\epsilon$&576.06     \\\hline
$10^3$&$10^7$&LS&30.31&7.82&7.11&0&DNC&DNC&576.49       \\\hline
$10^3$&$10^8$&LS&30.29&7.79&7.08&0&DNC&DNC&576.71       \\\hline
$10^3$&$10^9$&LS&30.31&7.81&7.11&0&DNC&DNC&576.52       \\\hline
\end{tabular}

\begin{itemize}
	\item In both constant and average cases:
	\begin{itemize}
		\item TS1 is the best lower bound, whenever available. TS2 is $\epsilon$ close to TS1, but not better. So contrary to intuition, a good continuous solution does not necessarily yield the best solution through neighborhood search.
		\item The fire sale is underperforming by about about 20-30\% for all $(T,N)$.
		\item Upper bound is not tight ranging from $40\%$ to seven fold, when compared to the best lower bound. Hence the interval $\big[\mbox{BEST LB; UB}\big]$ for the initial problem remains large, even in the constant case.
	\end{itemize}
	\item In the constant prices case:
	\begin{itemize}
		\item ILS is the best lower bound when two-step approach is not available. So ILS beats LocalSolver.
		\item However, all lower bounds are close, within a $0.1\%$ radius.
		\item Uniform sale is trailing by about $7\%$ when $T=10$. 	Its gap is stable in $N$ but decreases in $T$ and reaches less than $1\%$ for $T=10^3$. 
	\end{itemize}
		\item In the average case:
		\begin{itemize}
			\item When two-step are available, LS is close to TS1. The gap is less than $1\%$, but it seems to grow with $T$. 
			\item When two-step are not available, LS becomes the best lower bound. So LocalSolver beats ILS.  
			\item ILS trails the best bound by about about 1-10\%, then again the gap grows\\ with $T$.
			\item Similar to small and medium size instances, LS and ILS gap differ, sometime significantly, among instances. We discuss the results for a complete batch when $(T,N)=(3,6)$.  
		\end{itemize}
\end{itemize}
\begin{tabular}{|c|c|c|c|c|}
\hline
Quality&\multicolumn{4}{|c|}{$\frac{2}{\pi}arctan(x)$}  \\\hline
\multicolumn{5}{|c|}{T=$10^3$, N=$10^6$} \\\hline
$\mu$&$\sigma$&BEST LB&ILS&LS             \\\hline\hline
\multicolumn{2}{|c|}{constant prices}&TS1(500)&0.01&0.07  \\\hline
-0.05&0.10&TS1(500)&2.08&0.04              \\\hline
-0.05&0.25&TS1(500)&3.45&1.23             \\\hline
-0.05&0.70&TS1(500)&8.75&0.31              \\\hline
0&0.10&TS1(500)&1.74&0.05    \\\hline
0&0.25&TS1(500)&5.01&0.08   \\\hline
0&0.70&TS1(500)&13.65&0.35   \\\hline
0.05&0.10&TS1(500)&2.26&0.09 \\\hline
0.05&0.25&TS1(500)&5.32&0.26              \\\hline
0.05&0.70&TS1(500)&22.42&0.25              \\\hline
\end{tabular}
\\\\
\begin{itemize}
	\item  In line with previous results, ILS accuracy (or lack thereof) decreases materially with volatility, and effect is more pronunced in a higher expected return environment.
	\item We observe the same pattern for LocalSolver, with the exception of $(\mu,\,\sigma)=(-0.05,0.25)$.
	\item ILS beats LocalSolver in the constant prices case only and is beaten in the average case. As mentioned previously, ILS is performing the worst for high volatility cases, because adjacent steps are not as efficient for highly volatile stock prices. We discuss potential improvements in section \ref{sectionCCL}.
\end{itemize}
Lastly, we display corresponding CPU tables:
\\\\\begin{tabular}{|c|c|c|c|c|c|c|c|c|}
\hline
\multicolumn{2}{|c|}{CPU}&\multicolumn{7}{|c|}{CST, $\frac{2}{\pi}arctan(x)$}\\
\hline
T&N&FS&US&ILS&LS&TS1&TS2&UB              \\\hline
$10^1$&$10^7$&$<\epsilon$&$<\epsilon$&$<\epsilon$&600&20.92&19.11&$<\epsilon$   \\\hline
$10^1$&$10^8$&$<\epsilon$&$<\epsilon$&4&600&197.31&19.61&$<\epsilon$           \\\hline
$10^1$&$10^9$&$<\epsilon$&$<\epsilon$&36&600&DNC&DNC&$<\epsilon$     \\\hline
$10^2$&$10^6$&$<\epsilon$&$<\epsilon$&1.31&600&39.58&19.40&$<\epsilon$       \\\hline
$10^2$&$10^7$&$<\epsilon$&$<\epsilon$&1.99&600&1948.18&1918.30&$<\epsilon$       \\\hline
$10^2$&$10^8$&$<\epsilon$&$<\epsilon$&6&600&DNC&DNC&$<\epsilon$      \\\hline
$10^2$&$10^9$&$<\epsilon$&$<\epsilon$&38&600&DNC&DNC&$<\epsilon$     \\\hline
$10^3$&$10^6$&$<\epsilon$&$<\epsilon$&2220.02&3600&1466.14&1269.08&$<\epsilon$ \\\hline
$10^3$&$10^7$&$<\epsilon$&$<\epsilon$&7930.42&3600&DNC&DNC&$<\epsilon$             \\\hline
$10^3$&$10^8$&$<\epsilon$&$<\epsilon$&14719.71&10800&DNC&DNC&$<\epsilon$         \\\hline
$10^3$&$10^9$&$<\epsilon$&$<\epsilon$&25646.61&10800&DNC&DNC&$<\epsilon$         \\\hline
\end{tabular}
\\\\\\
\begin{tabular}{|c|c|c|c|c|c|c|c|c|}
\hline
\multicolumn{2}{|c|}{CPU}&\multicolumn{7}{|c|}{AVG, $\frac{2}{\pi}arctan(x)$}\\
\hline
T&N&FS&US&ILS&LS&TS1&TS2&UB              \\\hline
$10^1$&$10^7$&$<\epsilon$&$<\epsilon$&$<\epsilon$&600&19.21&5.7&$<\epsilon$         \\\hline
$10^1$&$10^8$&$<\epsilon$&$<\epsilon$&3.34&600&195.45&18.01&$<\epsilon$             \\\hline
$10^1$&$10^9$&$<\epsilon$&$<\epsilon$&38.33&60&DNC&DNC&$<\epsilon$    \\\hline
$10^2$&$10^6$&$<\epsilon$&$<\epsilon$&0.21&60&28.1&10.18&$<\epsilon$  \\\hline
$10^2$&$10^7$&$<\epsilon$&$<\epsilon$&0.29&600&1041.73&1011.34&$<\epsilon$          \\\hline
$10^2$&$10^8$&$<\epsilon$&$<\epsilon$&4.89&60&DNC&DNC&$<\epsilon$     \\\hline
$10^2$&$10^9$&$<\epsilon$&$<\epsilon$&39.34&60&DNC&DNC&$<\epsilon$    \\\hline
$10^3$&$10^6$&$<\epsilon$&$<\epsilon$&154.12&600&557.35&396.7&$<\epsilon$           \\\hline
$10^3$&$10^7$&$<\epsilon$&$<\epsilon$&257.41&600&DNC&DNC&$<\epsilon$  \\\hline
$10^3$&$10^8$&$<\epsilon$&$<\epsilon$&366.52&600&DNC&DNC&$<\epsilon$  \\\hline
$10^3$&$10^9$&$<\epsilon$&$<\epsilon$&648.6&600&DNC&DNC&$<\epsilon$   \\\hline
\end{tabular}
\\\\\\
\\\\\begin{tabular}{|c|c|c|c|c|}
\hline
CPU &\multicolumn{4}{|c|}{$\frac{2}{\pi}arctan(x)$}  \\\hline
\multicolumn{5}{|c|}{T=$10^3$, N=$10^6$} \\\hline
$\mu$&$\sigma$&TS1(500)&ILS&LS            \\\hline\hline
\multicolumn{2}{|c|}{constant prices}&1466.14&2220.02&3600\\\hline
-0.05&0.1&408.06&52.62&600 \\\hline
-0.05&0.25&448.05&3.15&600 \\\hline
-0.05&0.7&340.44&1.04&600  \\\hline
0&0.1&729.33&229.31&600    \\\hline
0&0.25&657.45&8.58&600     \\\hline
0&0.7&294.03&1.59&600      \\\hline
0.05&0.1&882.91&1082.77&600\\\hline
0.05&0.25&656.1&6.99&600   \\\hline
0.05&0.7&599.75&1.07&600   \\\hline
\end{tabular}
\\
\begin{itemize}
	\item We released the previous 10 minutes time cap, for very large instances, to compare final results.
	\item ILS is fast for large instances. However it grows slowly with $N$ but very rapidly, with (empirically in $T^3$ in the table).
	\item TS1 complexity results perform as expected in  $O(T^{\frac{3}{2}}\cdot N)$. TS1 CPU time remains tractable for large instances. However TS1 and TS2 are not available for very larges instances due to memory constraints.	
	\item TS2 is slightly faster than TS1, except for $(T,N)=(1,8),(2,6)$ where it is significantly faster.
	\item Upper bound is computed straightforwardly.
  	\item When $(T,N)=(3,6)$:
  	\begin{itemize}
  		\item ILS converges much faster when volatility increases as it gets quickly stuck in the first local maximum it returns.
  		\item TS1 CPU time also decreases with $\sigma$. However, on the contrary to ILS, DP based algorithm seems well suited for timeseries with elevated peaks which concentrate most of the sale.
  	\end{itemize} 
 \end{itemize}

	\section{Conclusion and perspectives}\label{sectionCCL}
		We solved the non convex, integer, non linear mathematical program (\ref{pbMain0}), with a linear constraint, exactly for small instances using dynamic programming. We also found either the optimal or a very tight lower bound for medium sizes instances thanks to the two-step method based on hybrid DP (coarse grain or continuous relaxation coupled with DP with bounds). We derived its complexity and compared it to the exact DP. We provided different approaches to get tight lower bounds for medium size instances. Numerically, we beat LocalSolver in most cases. We also obtain an upper bound which is not tight\\
For most larges instances, our two-step method is available and we provided a tight lower bound which beats LocalSolver. Upper bound provides us with an interval for the optimal value of the initial problem which is not thin.\\ 
For some large and very large instances, where two-step method cannot be applied due to memory constraints, leaving the Iterated Local Search the only option available to us. We are beating LocalSolver only in the constant case, and losing to it in the non constant case. The upper bound provides is again with a wide interval for the optimal value.\\
While numerical experiments were necessarily performed on a few select penalty functions, our lower bound techniques apply, as discussed in section \ref{sectionDefPb}, to any real increasing function of $\NN$ and upper bound techniques to any $\CC^1$ real increasing function of $\RR_+$.\\
We now discuss the shortcomings of our approaches and the perspectives for future work. We begin with a technical improvement and then present methodological perspectives for future research.
\begin{itemize}
	\item \textbf{Memory management: } the efficient data structure suggested in section \ref{sectionNELarge} would enable to improve memory management and potentially gain an order of magnitude.
	\item \textbf{Upper bound quality:} our upper bound is not tight and gets wider when $T$ grows. A better upper bound would provide a tighter gap for the optimal value, especially for (very) large instances.
	\item \textbf{Iterated Local Search:} although ILS algorithm returns very good results in the constant case, it does not fare well when prices fluctuations are wild. While shifting adjacent time steps is not efficient in that case, one can shift $x_i,\, x_{i+k}$ for arbitrary k. An interesting question is how to choose the sequence of $k$'s to improve quality in reasonable CPU time. 
\end{itemize}
		
%
%

	\appendix
	\section{g function calibration tables}\label{sectionCalib}
	Calibration factor $\eta$ is completely defined with $L$ and $H$, according to the calibration equation $G(\eta\,L)=H$. Sequence $y_t$ ($\sum_{i=1}^t x_i$) goes to N as t goes to T. Since, $\lim_{+\infty} g=1$, it is natural for g to get close to 1 as y goes to N. Hence $L=N$ seems a logical choice.\\
How close we get to 1 is precisely the role of threshold $H$. We considered different values for $H \in ]0;1[$. As mentioned previously, our goal is to preserve a gap between naive heuristics and optimal solution. A significant gap enables us to perceive more easily the quality of the different heuristics we introduced.
The lack thereof, on the other hand, yields a flatter landscape where it is more difficult, numerically, to exhibit the best strategies. We also want this gap to remain stable, or at least not to vanish, as T and N grow.\\     
Prices are taken constant for calibration purposes. We tested different value for $H$ but displayed only $H=0.75$, $H=0.99$, which are representative. Lastly we also presented $\eta=1$, as a baseline, which corresponds to the no calibration $(g\equiv G)$ case.\\
In the following calibration tables, FS corresponds to the fire sale strategy, US the uniform sale. Columns corresponds to prototype function G. Figures are expressed in $\%$ and measure relative difference to optimal (exact DP) solution:\\
\begin{enumerate}
	\item\begin{tabular}{|c|c|c|c|c|c|c|c|}
\hline
\multicolumn{2}{|c|}{\boldmath{$\eta_{0.75}$}}&\multicolumn{2}{|c|}{$\frac{x}{1+x}$}& \multicolumn{2}{|c|}{$\frac{2}{\pi}\arctan(x)$}& \multicolumn{2}{|c|}{$1-\frac{2}{1+\sqrt{1+x}}$}
\\
\hline
T&N&FS&US&FS&US&FS&US
\\
\hline\hline
$10^1$&$10^2$&33.44&0.84&37.85&0.52&23.58&1.43
\\
\hline
$10^1$&$10^3$&33.45&0.84&37.86&0.53&23.58&1.43
\\
\hline
$10^1$&$10^4$&33.45&0.84&37.86&0.53&23.58&1.43
\\
\hline
$10^1$&$10^5$&33.45&0.84&37.86&0.53&23.58&1.43
\\
\hline
$10^1$&$10^6$&33.45&0.84&37.86&0.53&23.58&1.43
\\
\hline
$10^2$&$10^3$&36.65&0.09&40.9&0.05&26.77&0.24
\\
\hline
$10^2$&$10^4$&36.65&0.09&40.9&0.05&26.77&0.24
\\
\hline
$10^2$&$10^5$&36.65&0.09&40.9&0.05&26.77&0.24
\\
\hline
$10^3$&$10^5$&36.97&0.01&41.19&0.01&27.1&0.03
\\
\hline
\end{tabular}

	\item\begin{tabular}{|c|c|c|c|c|c|c|c|}
\hline
\multicolumn{2}{|c|}{\boldmath{$\eta_{0.99}$}}&\multicolumn{2}{|c|}{$\frac{x}{1+x}$}& \multicolumn{2}{|c|}{$\frac{2}{\pi}\arctan(x)$}& \multicolumn{2}{|c|}{$1-\frac{2}{1+\sqrt{1+x}}$}
\\
\hline
T&N&FS&US&FS&US&FS&US
\\
\hline\hline
$10^1$&$10^2$&18.27&6.04&20.42&7.81&5.48&0.94
\\
\hline
$10^1$&$10^3$&18.38&6.17&20.52&7.92&5.51&0.98
\\
\hline
$10^1$&$10^4$&18.38&6.17&20.52&7.92&5.51&0.98
\\
\hline
$10^1$&$10^5$&18.38&6.17&20.52&7.92&5.51&0.98
\\
\hline
$10^1$&$10^6$&18.38&6.17&20.52&7.92&5.51&0.98
\\
\hline
$10^2$&$10^3$&22.63&1.97&25.00&2.12&7.00&0.53
\\
\hline
$10^2$&$10^4$&22.65&2.00&25.00&2.14&7.08&0.62
\\
\hline
$10^2$&$10^5$&22.65&2.00&25.01&2.14&7.08&0.62
\\
\hline
$10^3$&$10^5$&23.11&0.24&25.49&0.23&7.28&0.18
\\
\hline
\end{tabular}

	\item\begin{tabular}{|c|c|c|c|c|c|c|c|}
\hline
\multicolumn{2}{|c|}{\boldmath{$\eta =1$}}&\multicolumn{2}{|c|}{$\frac{x}{1+x}$}& \multicolumn{2}{|c|}{$\frac{2}{\pi}\arctan(x)$}& \multicolumn{2}{|c|}{$1-\frac{2}{1+\sqrt{1+x}}$}
\\
\hline
T&N&FS&US&FS&US&FS&US
\\
\hline\hline
$10^1$&$10^2$&18.16&6.03&15.25&6.41&23.91&1.98
\\
\hline
$10^1$&$10^3$&3.44&1.79&2.53&1.46&18.43&2.59
\\
\hline
$10^1$&$10^4$&0.45&0.28&0.32&0.21&9.47&1.6
\\
\hline
$10^1$&$10^5$&0.05&0.04&0.04&0.02&3.68&0.66
\\
\hline
$10^1$&$10^6$&0.01&$<\epsilon$&$<\epsilon$&$<\epsilon$&1.25&0.23
\\
\hline
$10^2$&$10^3$&4.61&1.18&3.38&1.08&22.06&0.96
\\
\hline
$10^2$&$10^4$&0.68&0.3&0.47&0.23&11.9&0.89
\\
\hline
$10^2$&$10^5$&0.09&0.05&0.06&0.04&4.79&0.45
\\
\hline
$10^3$&$10^5$&0.09&0.03&0.06&0.02&4.92&0.14
\\
\hline
\end{tabular}

\end{enumerate}
\textbf{Conclusion:} 
\begin{itemize}
	\item  For functions $x \mapsto \frac{x}{1+x}$ and $x\mapsto \frac{2}{\pi}\arctan(x)$, $\eta_{0.99}$ calibration has the most discriminatory power.
	\item For function $x\mapsto 1-\frac{2}{1+\sqrt{1+x}}$, $\eta_{0.75}$ is slightly better for small instances and then vanishes quickly when $T\geq 2$. $\eta_{0.99}$ is more stable and offers more discriminatory power for medium size instances.
	\item For scaling factor $\eta_{0.99}$, FS distance to optimal remains stable as T and N grow. In the US case, it is stable in N, but decreases, albeit slower compared to other calibrations, in T.
	\item Lastly, we notice that functions $x \mapsto \frac{x}{1+x}$ and $x\mapsto \frac{2}{\pi}\arctan(x)$ have roughly the same convergence speed (it is related to the similarity of their first derivative' expression), while $x\mapsto 1-\frac{2}{1+\sqrt{1+x}}$ exhibits lower figures as US distance to optimal is smaller than $1\%$, across all instances.
\end{itemize}
	
	\bibliographystyle{siam}
	\bibliography{bib_articleDP_V3}	
	
\end{document}